\newcommand\independent{\protect\mathpalette{\protect\independenT}{\perp}}
\def\independenT#1#2{\mathrel{\rlap{$#1#2$}\mkern2mu{#1#2}}}
\newtheorem{theorem}{Theorem}
\newtheorem{lemma}{Lemma}
\newtheorem{remark}{Remark}
\newcommand{\norm}[1]{\left\lVert#1\right\rVert}
\title{Structural Nested Mean Models\\ Under Parallel Trends Assumptions}
\author[1]{Zach Shahn}
\author[2]{Oliver Dukes}
\author[1]{Meghana G. Shamsunder}
\author[3]{David Richardson}
\author[4]{Eric Tchetgen Tchetgen}
\author[5]{James Robins}
\affil[1]{CUNY School of Public Health, New York, NY, USA}
\affil[2]{Ghent University, Ghent,  Belgium}
\affil[3]{University of California Irvine, Irvine, CA, USA}
\affil[4]{University of Pennsylvania, Philadelphia, PA, USA}
\affil[5]{Harvard TH Chan School of Public Health, Boston, MA, USA}
\begin{document}

\maketitle

\begin{abstract}

In this paper, we link and extend two approaches to estimating time-varying treatment effects on repeated continuous outcomes—time-varying \textit{Difference in Differences} (DiD; see \citet{roth2023} and \citet{chaisemartin_review} for reviews) and \textit{Structural Nested Mean Models} (SNMMs; see \citet{snmm_review2014} for a review). In particular, we show that SNMMs, which were previously only known to be nonparametrically identified under a no unobserved confounding assumption, are also identified under a modified version of the parallel trends assumption typically used to justify time-varying DiD methods. Because SNMMs model a broader set of causal estimands, our results allow practitioners of existing time-varying DiD approaches to address additional types of substantive questions under similar assumptions. SNMMs enable estimation of time-varying effect heterogeneity, the lasting effects of a “blip” of treatment at a single time point, effects of sustained interventions (possibly on continuous or multi-dimensional treatments) when treatment repeatedly changes value in the data, controlled direct effects, effects of dynamic treatment strategies that depend on covariate history, and more. Our results also allow analysts who apply SNMMs under the no unobserved confounding assumption to estimate some of the same causal effects under alternative identifying conditions and thus potentially to triangulate evidence. We provide a method for sensitivity analysis to violations of our parallel trends assumption.
We further explain how to estimate optimal treatment regimes via optimal regime Structural Nested Mean Models under parallel trends assumptions plus an assumption that there is no effect modification by unobserved confounders. Finally, we illustrate our methods with real data applications estimating effects of Medicaid expansion on uninsurance rates, effects of floods on flood insurance take-up, and effects of sustained changes in temperature on crop yields.
\end{abstract}

Keywords: Structural Nested Mean Model; Parallel Trends; Difference-in-Differences; Time-varying treatments; Heterogeneous effects

\section{Introduction}

In this paper, we link and extend two approaches to estimating time-varying treatment effects on repeated continuous outcomes—time-varying Difference in Differences (DiD; see \citet{roth2023} and \citet{chaisemartin_review} for reviews) and Structural Nested Mean Models (SNMMs; see \citet{snmm_review2014} for a review). In particular, we show that SNMMs, which were previously only known to be nonparametrically identified under a no unobserved confounding assumption, are also identified under parallel trends assumptions similar to those typically used to justify time-varying DiD methods. Because SNMMs model a broader set of causal estimands, our results allow practitioners of time-varying DiD approaches to address additional types of substantive questions (such as time-varying effect heterogeneity, the lasting effects of a “blip” of treatment at a single time point, effects of sustained interventions on possibly continuous valued treatment variables, effects of dynamic treatment rules that depend on covariate history, and others) under similar assumptions. Our results also allow analysts who apply SNMMs under the no unobserved confounding assumption to estimate some of the same causal effects under alternative identifying conditions and thus potentially to triangulate evidence.

SNMMs \citep{jamie1994,jamie1997,robins2004optimal} are models of time-varying effect heterogeneity in the treated. 
In the so-called `staggered adoption' setting (in which all units start off untreated and then initiate treatment in a staggered fashion) that is the focus of most time-varying DiD literature, SNMMs can model conditional effects of first treatment initiation in the initiators as a function of time-varying covariate history up to the time of initiation. For example, an SNMM could model how the effect of Medicaid expansion varies with unemployment rate prior to expansion. (See our real data application in Section \ref{medicaid}.) Alternative DiD approaches cannot characterize such time-varying effect heterogeneity. Marginal effects of initiation at each time (i.e. not conditional on time-varying covariate history) targeted by standard DiD approaches are also identified given SNMM parameters by averaging the conditional effects. 

When treatment is not an absorbing state and may repeatedly change value over time in the data, investigators may be interested in effects of interventions setting treatment at all future time points, rather than merely at the time of initiation. Under an appropriate parallel trends assumption, SNMMs can model the conditional effects of initiating and \textit{sustaining} a treatment in those who initiated it, even if some initiators later discontinue in the data. Such effects have caused some consternation in the DiD literature (see Section 3.4 of \citet{roth2023}). As an example with a continuous valued treatment, in Section \ref{crops} we use an SNMM to estimate the average effects of sustained changes to `growing degree days' starting in year $m$ on future county-level crop yields, conditional on past growing degree days. 

SNMMs also directly model the (conditional) lasting effects of a final `blip' of treatment. For example, one might employ an SNMM to model the (conditional) lasting effects on student test scores of a single grade of school with a highly rated teacher followed by teachers with average ratings in subsequent grades. In Section \ref{floods}, we estimate the lasting impact of one final local flood compared to no further floods on county flood insurance take-up given flood history. Such contrasts could not be targeted by standard DiD methods under parallel trends assumptions.

If one is willing to assume a parallel trends assumption specific to an arbitrary treatment strategy $g$, we show that the general regime SNMM \citep{robins2004optimal} for strategy $g$ is identified and along with it the expected outcome trajectory had the population followed strategy $g$. In particular, $g$ can even be a dynamic regime in which treatment assignment at each time depends on covariate and treatment history. For example, $g$ might state `administer vasopressors whenever mean arterial blood pressure falls below 65 mmHG'. Effects of such complex strategies cannot be estimated via standard DiD approaches.
 
Another advantage of SNMMs is that they straightforwardly admit multi-dimensional treatments with continuous valued and/or discrete components. The problem of estimating effects of time-varying continuous valued treatments when units do not share a baseline treatment value has not yet been solved in the DiD literature \citep{de2024difference} but is straightforward to solve using SNMMs (as in the growing degree days and crops example in Section \ref{crops}, mentioned above). Moreover, multidimensional treatments enable estimation of controlled direct effects \citep{robins1992identifiability}, such as effects of Medicaid expansion on bankruptcy rates barring concurrent or future minimum wage increases.

 We further show that under the much stronger assumptions of parallel trends under \textit{all} treatment strategies and no effect modification by unobserved confounders (where the second assumption is essentially implied by the first, as we demonstrate in Appendix \ref{appendix_effect_mod}), it is possible to identify the parameters of an optimal regime SNMM \citep{robins2004optimal} and thus estimate optimal dynamic treatment strategies. This last result might be considered a contribution to the reinforcement learning literature, where estimation of optimal regime SNMMs is sometimes called ‘A-learning’ \citep{schulte2014q}.

 We note that throughout we are assuming availability of panel data, i.e. data on the same units over time, with repeated outcome measures. (Units might be geographic and their outcomes composites of individual residents, such as unemployment rate in a county. In this case, our method would not require that the same individual residents are surveyed at each time point to ascertain the geographic unit level outcomes.) We also stress that while many SNMM studies consider settings with an outcome measured only once at the end of follow up, the methods in this paper require repeated outcome measurements. 


The organization of the paper is as follows. In Section \ref{background}, we introduce notation, state assumptions, and describe causal contrasts that can be targeted by additive SNMMs. In Section \ref{section_id_est}, we establish identification and provide estimators for the parameters of an additive SNMM under our parallel trends assumption. In Section \ref{mult_snmm}, we extend our results to SNMMs for multiplicative effects. In Section \ref{section_general}, we discuss estimation of the optimal regime via an optimal regime SNMM under parallel trends assumptions for all regimes plus an additional `no effect modification by unobserved confounders' assumption. In Section \ref{section_real_data}, we present several applications to real data, including effects of Medicaid expansion on uninsurance rates, effects of floods on flood insurance take-up, and effects of temperature on crop yields. In Section \ref{section_sens}, we discuss sensitivity analysis for violations of conditional parallel trends. In Section \ref{section_other_work}, we situate the contributions of this paper within the time-varying DiD literature. We explain why our parallel trends assumption (which was concurrently proposed by \citet{renson2023identifying}), might be more plausible than alternative assumptions common in the DiD literature (e.g. \citet{callaway2021difference}) in the presence of time-varying confounders. In Section \ref{section_conclusion}, we conclude. The theme of the paper is that it can be fruitful to use SNMMs under assumptions similar to those that have traditionally justified DiD.

\section{Notation, SNMMs, and Assumptions}\label{background}
\subsection{Notation}
Suppose we observe a cohort of $N$ subjects indexed by $i \in \{1,\ldots,N\}$. Assume that each subject is observed at regular intervals from baseline
time $0$ through end of follow-up time $K$, and there is no loss to
follow-up. At each time point $m$, the data are collected on $O
_m=(Z_m,Y_m,A_m)$ in that temporal order. $A_m$ denotes the (\textit{possibly
multidimensional with discrete and/or continuous components}) treatment
received at time $m$, $Y_m$ denotes the outcome of interest at time $m$, and 
$Z_m$ denotes a vector of covariates at time $m$ excluding $Y_m$. For
arbitrary time varying variable $X$: we denote by $\bar{X}_m=(X_0,\ldots,X_m)
$ the history of $X$ through time $m$; we denote by $\underline{X}%
_m=(X_m,\ldots,X_K)$ the future of $X$ from time $m$ through time $K$; and, unless otherwise stated,
whenever the negative index $X_{-1}$ appears it denotes the null value. Define $\bar{L}_m$ to be a function of $(\bar{Z}_m,\bar{Y}_{m})$, i.e. a possibly high dimensional summary of the covariate and outcome history through time $m$. We can regard $\bar L_m$ as analyst chosen. All results are stated in terms of an arbitrary $\bar L_m$, but it should be understood that assumptions might hold for some choices and not others depending on the data generating process. We will discuss in Remark \ref{remark_Y_in_L} how inclusion of certain summaries of outcome history in $\bar L_m$ can be problematic.

We adopt the counterfactual framework for time-varying treatments
\citep{robins1986new} which posits that corresponding to each time-varying
treatment regime $\bar{a}_m$, each subject has a counterfactual or potential
outcome $Y_{m+1}(\bar{a}_{m})$ that would have been observed had that
subject received treatment regime $\bar{a}_m$. More generally, let $g$ denote an arbitrary, possibly dynamic treatment strategy, where $%
g\equiv (g_0,g_1,\ldots,g_K)$ is a vector of functions $g_t: (\bar{L}_t,\bar{%
A}_{t-1}) \rightarrow a_t$ that determine treatment values at each time
point given observed history. Let $Y_k(g)$ then denote the counterfactual value of the outcome at time $k$ under $g$. We will adopt the convention that $Y_k(\bar{a}_m,\underline{g}_{m+1})$ denotes the counterfactual outcome under a regime assigning treatments $\bar{a}_m$ through time $m$ and then following $g$ thereafter.

\subsection{SNMMs}

SNMMs are models of time-varying effect heterogeneity. They are concerned with the causal contrasts 
\begin{equation}  \label{blip}
\gamma_{mk}^{g*}(\bar{l}_m,\bar{a}_m)\equiv E[Y_k(\bar{a}_m,\underline{g}_{m+1}) -
Y_k(\bar{a}_{m-1},\underline{g}_m)|\bar{A}_m=\bar{a}_m,\bar{L}_m=\bar{l}_m]
\end{equation}
for all $k>m$. $\gamma_{mk}^{g*}(\bar{l}_m,\bar{a}_m)$ is the average effect at
time $k$ among units with history $(\bar{L}_m=\bar{l}_m,\bar{A}_m=\bar{a}%
_m)$ of receiving treatment $a_m$ at time $m$ and then following $g$ thereafter
compared to following $g$ at time $m$ and thereafter. We will sometimes suppress the dependence of $\gamma_{mk}^{g*}(\bar{l}_m,\bar{a}_m)$ on $g$ and write $\gamma_{mk}^{*}(\bar{l}_m,\bar{a}_m)$ for notational simplicity.

An important special case occurs when $g=\bar{0}$ is the `untreated' regime. (In fact, historically, SNMMs were originally developed for this special case \citep{jamie1994} and were only later extended to general $g$ \citep{robins2004optimal}.) In the $g=\bar{0}$ setting, the contrasts in (\ref{blip}) represent the conditional effects of one last
`blip' of treatment at time $m$ and at level $a_m$. Hence (\ref{blip}) are accordingly often referred to as `blip functions'. For example, in Section \ref{floods}, we consider a blip function representing the lasting conditional effects on flood insurance take-up of a single local flood followed by no further floods compared to no further floods at all, given flood history. 

A \textit{parametric} SNMM imposes functional forms on the blip functions $\gamma_{mk}^{g*}(\bar{l}%
_m,\bar{a}_m)$ for each $k>m$, i.e. 
\begin{equation}  \label{snmm}
\gamma_{mk}^{g*}(\bar{l}_m,\bar{a}_m) = \gamma_{mk}^{g*}(\bar{l}_m,\bar{a}_m;\psi^*),
\end{equation}
where $\psi^*$ is the true value of an unknown finite dimensional parameter vector and $\gamma_{mk}^{g*}(\bar{l}_m,\bar{a}_m;\psi)$ is a known function equal to $0$ whenever $\psi=0$ or $a_m=g(\bar{l}_m,\bar{a}_{m-1})$.

\begin{remark}\label{staggered}\textbf{Effects of Initiation} As we mentioned in the introduction, most applications of time-varying DiD methods have been in staggered adoption settings where all units start off untreated and then begin treatment at different times. Interest centers on the average effect of the observed treatment trajectory compared to never initiating treatment in those that actually initiated treatment at level $a_m$ at time $m$ (and then possibly changed their treatment value afterward under the observational regime). If we let $g=\bar{0}$ and code exposure such that it takes a non-zero value only at the time of first treatment and 0 at all other times, then (\ref{blip}) represents the conditional effects of treatment initiation in the treated. We use this treatment coding when analyzing Medicaid expansion effects in Section \ref{medicaid}. SNMMs for initiation effects have been called `coarse SNMMs' in previous work \citep{robins1998correction,lok2012impact,https://doi.org/10.3982/ECTA17522}.
\end{remark}

\begin{remark}\textbf{Effects of Sustained Interventions}
    A simple but important example of a dynamic regime $g$ is the `sustain previous treatment value' regime
\begin{equation}\label{continue_g}
g_{sus}(\bar{L}_m,\bar{A}_{m-1})=A_{m-1}.
\end{equation}
This regime facilitates estimation of effects of sustained interventions. For example,
\begin{equation*}
\gamma_{mk}^{g_{sus}*}(\bar{l}_m,(\bar{0}_{m-1},1))= E[Y_k(\bar{0}_{m-1},\underline{1})-Y_k(\bar{0})|\bar{A}_m=(\bar{0}_{m-1},1),\bar{L}_m=\bar{l}_m]
\end{equation*}
is the conditional effect of starting and continuing treatment at time $m$ compared to never starting in those who did start at $m$. In Section \ref{crops}, we apply regime (\ref{continue_g}) to study conditional effects of a sustained change in (continuous valued) `growing degree days' on crop yields given growing degree days prior to the change.
\end{remark}

\begin{remark}\label{cde_estimands}\textbf{Controlled Direct Effects} Multidimensional treatments enable the contrasts in (\ref{blip}) to straightforwardly represent controlled direct effects (CDEs) \citep{robins1992identifiability}. Suppose $A_m=(A_{m1},A_{m2})$ is a two dimensional treatment and $g=\bar{0}$ is the untreated regime. 
For example, suppose $A_{m1}$ denotes initial Medicaid expansion (i.e. taking the value $1$ in the first year of expansion and $0$ otherwise, as in Remark \ref{staggered}) and $A_{m2}$ denotes a minimum wage increase in year $m$. $\gamma_{mk}^*(\bar{a}_{m},\bar{l}_m)$ with $a_m=(1,0)$ is then the conditional effect of Medicaid expansion and no future minimum wage increase in year $m$ expanders that did not increase their minimum wage in year $m$.  
\end{remark}
\begin{remark}\label{derived_estimands}\textbf{Derived quantities}  Given knowledge of $\gamma_{mk}^*(\bar{l}_{m},\bar{a}_{m})$,  additional derived quantities of interest are identified under the Consistency assumption (\ref{consistency}) stated in the following subsection \citep{jamie1994,robins2004optimal}. For any subject history $(\bar{L}_m=\bar{l}_m,\bar{A}_m=\bar{a}%
_{m-1})$ of interest, expected conditional counterfactual outcomes
under $\underline{g}_m$, i.e. $E[{Y}_{k}(a_{m-1},%
\underline{g}_m)|\bar{L}_m=\bar{l}_m,\bar{A}_{m-1}=\bar{a}_{m-1}]$ for $k>m$, would also be
identified as $E[Y_{k}-\sum_{j=m}^{k-1}\gamma_{jk}^*(\bar{L}_{j},\bar{A}_{j})|\bar{L}_m=\bar{l}_m,\bar{A}_{m-1}=\bar{a}_{m-1}]$. Quantities that further condition
on treatment at $m$, i.e. $E[Y_{k}(a_{m-1},\underline{g}_m)|\bar{L}%
_m=\bar{l}_m,\bar{A}_{m}=\bar{a}_{m}]$ for $k>m$ would be identified as $E[Y_{k}-\sum_{j=m}^{k-1}\gamma_{jk}^*(\bar{L}_{j},\bar{A}_{j})|\bar{L}_m=\bar{l}_m,\bar{A}_{m}=\bar{a}_{m}]$. Marginalizing over $\bar{L}_m$ then identifies $E[Y_{k}(a_{m-1},\underline{g}_m)|\bar{A}_{m-1}=\bar{a}_{m-1}]$ and $E[Y_{k}(a_{m-1},\underline{g}_m)|\bar{A}_{m}=\bar{a}_{m}]$ for $k>m$. In staggered adoption settings with treatment coded as described in Remark \ref{staggered} and $g=\bar{0}$, $E[\gamma_{mk}^*(\bar{l}_{m},\bar{a}_{m})|\bar{A}_{m}=\bar{a}_{m}]$ are the same causal estimands targeted by standard DiD approaches such as \cite{callaway2021difference} if $a_m=1$. The derived quantities $E[Y_k(g)]$ for each $k>0$, i.e. the expected counterfactual outcome trajectory under no treatment,
would also be identified by $E[Y_k-\sum_{j=0}^{k-1}\gamma_{jk}^*(\bar{L}_j,\bar{A}_j)]$.  
\end{remark}

\subsection{Assumptions}
We make the standard Consistency assumption 
\begin{equation}  \label{consistency}
\textbf{Consistency: } Y_{m}(\bar{A}_{m-1}) = Y_m \text{ } \forall m\leq K
\end{equation}
stating that observed outcomes are equal to counterfactual outcomes
corresponding to observed treatments. Throughout, we will also assume, for regime $g$ of interest,
\begin{equation}  \label{positivity}
\textbf{Positivity: } f_{A_m|\bar{L}_m,\bar{A}_{m-1}}(g(\bar{l}_m,\bar{a}
_{m-1})|\bar{l}_m,\bar{a}%
_{m-1})>0 \text{ whenever } f_{\bar{L}_m,\bar{A}_{m-1}}(\bar{l}_m,\bar{a}%
_{m-1})>0.
\end{equation}
However, we note that under parametric models (\ref{snmm})
positivity is not strictly necessary. 

\citet{jamie1994,jamie1997,robins2004optimal} has shown that $\gamma^{*}=(\gamma^{*}_{01},\ldots,%
\gamma^{*}_{(K-1)K})$, the vector of all blip functions $\gamma_{mk}^*(\bar{l}_m,\bar{a}_m)$ with $m<k\leq K$, is nonparametrically identified and described
how to consistently estimate the parameter $\psi^{*}$ of a parametric SNMM (\ref{snmm}) under the assumption that there are no
unobserved confounders, i.e.
\begin{equation*}
Y_k(\bar{a}_{m-1},\underline{g}_m)\independent A_m|\bar{A}_{m-1}=\bar{a}_{m-1},\bar{L}_m \text{ }\forall k>m,\bar{a}_{m-1}.
\end{equation*}
In this paper, we will instead make the
parallel trends assumption 
\begin{align}
\begin{split}  \label{parallel_trends}
& \textbf{Time-Varying Conditional Parallel Trends: } \\
&E[Y_k(\bar{a}_{m-1},\underline{g}_m) - Y_{k-1}(\bar{a}_{m-1},\underline{g}_m)|%
\bar{A}_m=\bar{a}_m,\bar{L}_m] = \\
&E[Y_k(\bar{a}_{m-1},\underline{g}_m) - Y_{k-1}(\bar{a}_{m-1},\underline{g}_m)|%
\bar{A}_m=(\bar{a}_{m-1},g(\bar{L}_m,\bar{a}_{m-1})),\bar{L}_m] \\
&\forall k>m.
\end{split}%
\end{align}
When $k=m+1$, this assumption reduces to 
\begin{equation}
E[Y_{m+1}(\bar{a}_{m-1},g_m) - Y_{m}|\bar{A}_m=\bar{a}_m,\bar{L}_m] =
E[Y_{m+1}(\bar{a}_{m-1},g_m) - Y_m|\bar{A}_m=(\bar{a}_{m-1},g_m(\bar{L}_m,\bar{a}_{m-1})),\bar{L}_m].
\end{equation}
The time-varying conditional parallel trends
assumption states that, conditional on observed covariate history through
time $m$ and treatment history through time $m-1$, the expected
counterfactual outcome trends under strategy $g$ from time $m$ onwards
do not depend on whether the treatment actually received at time $m$ would have been assigned under $g$. Note that the substantive interpretation of this assumption depends on the particular strategy $g$ for which it is made. Most DiD literature (with the exception of \citet{renson2023identifying}) makes parallel trends assumptions relative to the untreated $g=\bar{0}$ regime. In fact, most literature (again with the exception of \citet{renson2023identifying}) problematically compares counterfactual untreated trends in the treated at time $m$ to trends in the population that is observed to never receive treatment through end of follow up. This conditioning on future treatments makes competing parallel trends assumptions considerably less plausible than our assumption above in the presence of time-varying confounding, as we discuss further when we compare our parallel trends assumption to others in the DiD literature in Section \ref{section_other_work}. In Figure \ref{fig:swig} of that section, we also provide a Single World Intervention Graph (SWIG) \citep{richardson2013single,chernozhukov2024applied} representing sufficient conditions under which our parallel trends assumption would hold.


\begin{remark}\textbf{Conditioning on prior outcomes}\label{remark_Y_in_L}
While we have defined $\bar{L}_m$ as an arbitrary summary of covariate and outcome history, we note that inclusion of past outcomes can be problematic. If the most recent outcome $Y_m$ were included in $\bar L_m$, then when $k=m+1$ the parallel trends assumption would imply that there
is no unobserved confounding (as first pointed out by \citet{laird1983further}), which we do not wish to assume as then the
estimators we introduce would not be needed. This means we cannot adjust for or estimate effect heterogeneity conditional on the most
recent outcome with our methods. In fact, under the SWIG in Figure \ref{fig:swig}, adjustment for any past outcome levels would lead to parallel trends violations due to collider bias, but adjustment for past \textit{increments} would be allowed. Thus, under that SWIG, $\bar L_m$ cannot contain $Y_t$ for any $t\leq m$, but can (and indeed must) contain increments $\bar\Delta_m$ where $\Delta_j\equiv Y_j-Y_{j-1}$.
\end{remark}


\begin{remark}\label{remark_pt_cde}
    \textbf{Parallel trends and CDEs} The parallel trends assumption may become more plausible under a CDE intervention, which is a possible motivation for considering CDEs. Suppose an investigator is interested in effects of Medicaid expansion on some outcome that is impacted both by Medicaid and the minimum wage. Given that states that expand Medicaid in a given year are also more likely to later increase the minimum wage, future minimum wage increases can lead to violations of the parallel trends assumption in an analysis where Medicaid is the sole treatment. However, by estimating the joint effect of Medicaid expansion and no future minimum wage expansion compared to never expanding Medicaid or increasing the minimum wage (i.e. the controlled direct effect of Medicaid expansion setting minimum wage increases to 0), the particular threat to the parallel trends assumption posed by future minimum wage increases is eliminated. 
\end{remark}

\section{\protect\normalsize Identification and g-Estimation of SNMMs
Under Parallel Trends}\label{section_id_est} 

\subsection{\protect\normalsize Nonparametric Identification}

{\normalsize For any $m=0,...,K-1$, let $H_{m}(\gamma)$ denote a vector with components 
\begin{equation}  \label{H}
H_{mk}(\gamma)\equiv Y_k-\sum_{j=m}^{k-1}\gamma_{jk}(\bar{L}_j,\bar{A}_j) 
\end{equation}
}{\normalsize for $k=m+1,...,K$. Here, the $\gamma_{jk}(\bar{L}_j,\bar{A}_j)$ are arbitrary functions of covariate and treatment history not necessarily equal to the true blip functions $\gamma^{*}_{jk}(\bar{L}_j,\bar{A}_j)$. \citet{jamie1994,robins2004optimal} showed that for every $m=0,...,K-1$, $H_{m}(\gamma^*)$ (i.e. $H_{m}(\gamma)$ evaluated at the true blip function) has the
important property: 
\begin{align}
\begin{split}  \label{H_cf}
&E[H_{m}(\gamma^*)|\bar{L}_m,\bar{A}_{m}] = E[\underbar{Y}_{m+1}(\bar{A}_{m-1},%
\underline{g}_m)|\bar{L}_m,\bar{A}_{m}].
\end{split}%
\end{align}
}

{\normalsize 
By (\ref{H_cf}) and the time-varying conditional parallel
trends assumption (\ref{parallel_trends}), it follows that for all $m=0,...,K-1$ and $k=m+1,...,K$,
\begin{align}  
\begin{split}\label{id}
&E[H_{mk}(\gamma^*) - H_{m,k-1}(\gamma^*)|\bar{A}_m=\bar{a}_m,\bar{L}_m] =
E[H_{mk}(\gamma^*) - H_{m,k-1}(\gamma^*)|\bar{A}_m=(\bar{a}_{m-1},g_m(\bar{L}_m,\bar{a}_{m-1})),\bar{L}%
_m].
\end{split}
\end{align}
That is, given the true blip function, the observable quantity $%
H_{mk}(\gamma^*)$ behaves like the counterfactual quantity $Y_k(\bar{A}%
_{m-1},\underline{g}_m)$ in that its conditional trend does not depend on $A_m$%
. We can exploit this property to establish identification and construct multiply robust estimating functions for $\gamma^*$. 

\begin{theorem}\label{standard_id}
Under (\ref{consistency}), (\ref{positivity}), and (\ref%
{parallel_trends}),\newline
(i) $\gamma^{*}$ is identified from the joint distribution of $(\bar{L}%
_{K-1},\bar{A}_{K-1},\bar{Y}_K)$ as the unique solution to 
\begin{equation}  \label{np_est_eq}
E\left\{\sum^{K-1}_{m=0} \sum^{K}_{k=m+1}[s_{mk}(\overline{L}_{m},\bar{A}_{m})-E\{s_{mk}(\overline{L}_{m},\bar{A}_{m})|%
\overline{L}_{m},\bar{A}_{m-1}\}]\{H_{mk}(\gamma) - H_{m,k-1}(\gamma)\}\right\} =0
\end{equation}
where $s_{mk}(\overline{l}_{m},\bar{a}_{m})$ is an arbitrary function of $(\overline{l}_{m},\bar{a}_{m})$ for $m=0,...,K$ and $k=m+1,...,K$.
\newline
(ii) Let $\tilde{E}\{s_{m}(\overline{L}_{m},\overline{A}%
_{m})|\overline{L}_{m},\overline{A}_{m-1}\}$ be a conditional expectation taken with respect to some density/mass function $\tilde{f}(A_m|\overline{L}_{m},\overline{A}_{m-1})$ not necessarily equal to the true density $f(A_m|\overline{L}_{m},\overline{A}_{m-1})$. Similarly, for all $m=0,...,K-1$ and $k=m+1,...,K$, let $\tilde{E}\{H_{mk}(\gamma) - H_{m,k-1}(\gamma)|\overline{L}_{m},\overline{A}_{m-1}\}$ be some function not necessarily equal to $E\{H_{mk}(\gamma) - H_{m,k-1}(\gamma)|\overline{L}_{m},\overline{A}_{m-1}\}$.

Then $\gamma^*$ satisfies 
\begin{align}\label{estimating_equations}
E\bigg(\sum^{K-1}_{m=0} \sum^{K}_{k=m+1}&[s_{mk}(\overline{L}_{m},\bar{A}_{m})-\tilde{E}\{s_{mk}(\overline{L}_{m},\bar{A}_{m})|%
\overline{L}_{m},\bar{A}_{m-1}\}]\nonumber \\&\times [H_{mk}(\gamma^*)-H_{m,k-1}(\gamma^*)-\tilde{E}\{H_{mk}(\gamma^*) - H_{m,k-1}(\gamma^*)|\overline{L}_{m},\overline{A}_{m-1}\}]\bigg) =0
\end{align}
for any arbitrary $s_{mk}(\overline{l}_{m},\bar{a}_{m})$, if for any $m=0,...,K-1$ and $k=m+1,...,K$, either

(a) $\tilde{E}\{H_{mk}(\gamma^*) - H_{m,k-1}(\gamma^*)|\overline{L}_{m},\overline{A}_{m-1}\}=E\{H_{mk}(\gamma^*) - H_{m,k-1}(\gamma^*)|\overline{L}_{m},\overline{A}_{m-1}\}$, or

(b) $\tilde{f}(A_m|\bar{L}_m,\bar{A}_{m-1})=f(A_m|\bar{L}_m,\bar{A}_{m-1})$.\newline
\newline
Thus, (\ref{estimating_equations}) is a multiply robust estimating
function. 
\end{theorem}
The proof can be found in Appendix \ref{appendix_theorem1}.
\subsection{\protect\normalsize Estimation}\label{estimation}

Theorem \ref{standard_id} establishes nonparametric identification of time-varying heterogeneous effects (\ref{blip}) under a parallel trends
assumption conditional on time-varying covariates. In practice, estimation would
proceed by first specifying a model (\ref{snmm}) with finite
dimensional parameter $\psi^{\ast}$.

To construct an estimator of $\psi^*$, we shall need to estimate multiple unknown nuisance functions. In what follows, for any $m=0,...,K$ we define the vector $H^\dagger_{m}(\gamma)$, which has components:
\[H^\dagger_{mk}(\psi)\equiv H_{mk}(\psi) - H_{m,k-1}(\psi).\]} Also, for all $m=0,...,K-1$, let 
\[\nu^*_{m}(\overline{l}_{m},\bar{a}_{m-1})\equiv E\{H^\dagger_{m}(\psi^*)|\bar{L}_m=\bar{l}_m,\bar{A}_{m-1}=\bar{a}_{m-1}\}.\] 
We will assume from here on that either $A_{m}$ is Bernoulli or $s_{mk}(\overline{l}_{m},\bar{a}_{m})$ from the previous section is linear in $a_m$; see Remark \ref{multi_cont} for a discussion of the general case. Then we must estimate  $v^*_{mk}(\bar{L}_{m},\bar{A}_{m-1})$ and $\pi^*_m(\bar{L}_{m},\bar{A}_{m-1}) \equiv E(A_m|\bar{L}_{m},\bar{A}_{m-1})$ (hereafter nuisance functions). We will consider state-of-the-art cross-fit doubly robust machine learning (DR-ML) estimators of $\psi^{\ast}$ in which the nuisance functions are estimated by arbitrary machine learning algorithms chosen by the analyst \citep{chernozhukov2018double,smucler2019unifying}. 

In what follows, $\mathbb{P}_N$ denotes the sample average and $\mathbb{P}$ is the true data generating law. If not stated otherwise, the expectation $E[\cdot]$ is evaluated at $\mathbb{P}$. For any function $f(O)$, we let $\norm{f}$ denote $\{\int f(O)^2 d\mathbb{P}(O)\}^{1/2}$; we also use $\norm{\cdot}_2$ to denote the Euclidean norm. 


Algorithm \ref{algorithm:cfdrml-2t} computes our cross fit estimator $\hat{\psi}^{cf}$, where $s=(s_0,\ldots,s_{K-1})$ is a user chosen vector of conformable vector functions. Note that while we have described the algorithm for two folds, it is straightforward to generalize to greater than two.

\begin{algorithm}
\caption{Implementation of a cross-fit DR-ML estimator}\label{algorithm:cfdrml-2t}
Randomly split the $N$ study subjects into two parts: an
estimation sample of size $n$ and a training (nuisance) sample of size $
n_{tr} = N-n$ with $n/N \approx 1/2$. Without loss of generality we shall
assume that $i = 1,\ldots,n$ corresponds to the estimation sample.\\ 
Applying machine learning methods to the training sample data, construct estimators $\hat{\pi}^{(-1)}$ and $\hat{\nu}^{(-1)}$ of $\pi^*$ and $\nu^*$, where $\pi^*\equiv (\pi^*_0,...,\pi^*_{K-1})$,  $\mu^*\equiv (\mu^*_0,...,\mu^*_{K-1})$ and likewise for $\hat{\pi}^{(-1)}$ and $\hat{\nu}^{(-1)}$.\\
Consider the estimating function
\begin{align*}
U_{m}(O;\psi,s_m,\pi,\nu)=&s_{m}(\overline{L}_{m},\overline{A}_{m-1})\{A_m-\pi_m(\overline{L}_{m},%
\overline{A}_{m-1})\}\{H^\dagger_{m}(\psi)-\nu_{m}(\overline{L}_{m},%
\overline{A}_{m-1})\}
\end{align*}
where $s_{m}(\overline{L}_{m},\overline{A}_{m-1})$, $m=0,...,K$ is an arbitrary $p\times (K-m)$-dimensional function of the covariate and treatment history.
Compute $\hat{\psi}^{(1)}$ from the $n$ subjects in the estimation sample as the (assumed unique) solution to vector estimating
equations \[\mathbbm{P}_N\left\{\sum^{K}_{m=0}U_m(\psi,s_m,\hat{\pi}^{(-1)},\hat{\nu}^{(-1)})\right\}=0.\]\\
Next, compute $\hat{\psi}^{(2)}$ just as $\hat{\psi}^{(1)}$, but with the training and estimation samples
reversed. \\
Finally, the cross fit estimate $\hat{\psi}^{cf}$ is $(\hat{\psi}^{(1)} + \hat{\psi}^{(2)})/2$.
\end{algorithm}

\begin{remark}\textbf{Handling nuisance functions that contain the parameter of interest}
Because step 2 of the cross-fitting procedure estimates conditional expectations of functions of $\psi$, the estimator $\hat{\psi}^{(1)}$ must, in general, be solved iteratively and might be difficult to compute. Remark 20 in Appendix 5 of \citet{liu2021efficient} discusses strategies for reducing the computational burden. If parameters $\psi_{mk}$ and $\psi_{m'k}$ of blip functions $\gamma^g_{mk}$ and $\gamma^g_{m'k}$ are variation independent for $m\neq m'$, then it is possible to backwards recursively estimate $\psi_{mk}^*$ given estimates $\hat{\psi}_{m'k}$ for $m'>m$. If, for each $j$ and $k$, $\gamma_{jk}^*(\bar{l}_j,\bar{a}_{j-1};\psi_{jk})$ is further assumed to be linear in $\psi_{jk}$, i.e. $\gamma_{jk}^*(\bar{L}_j,\bar{A}_{j-1};\psi_{jk})=\psi_{jk}^TR_{jk}$ for a given vector transformation $R_{jk}(\bar{L}_j,\bar{A}_j)$, then it is possible to estimate $E[H_{mk}(\psi^*)-H_{m,k-1}(\psi^*)|\bar{L}_m,\bar{A}_{m-1}]$ noting that
\begin{align*}
&E[H_{mk}(\psi_{mk};\hat{\underline{\psi}}_{m+1,k})-H_{m,k-1}(\psi_{m,k-1};\hat{\underline{\psi}}_{m+1,k-1})|\bar{L}_m,\bar{A}_{m-1}]\\
&=E[(Y_k - \sum_{j=m+1}^k\gamma_{jk}^*(\bar{L}_j,\bar{A}_j;\hat{\psi}_{jk}))-(Y_{k-1} - \sum_{j=m+1}^{k-1}\gamma_{jk-1}(\bar{L}_j,\bar{A}_j;\hat{\psi}_{jk-1}))|\bar{L}_m,\bar{A}_{m-1}]\\
&-\psi_{mk}^TE[R_{mk}|\bar{L}_m,\bar{A}_{m-1}] + \psi_{m,k-1}^TE[R_{m,k-1}|\bar{L}_m,\bar{A}_{m-1}]
\end{align*}
where $\hat{\underline{\psi}}_{m+1,k}$ denotes $(\hat{\psi}_{m+1,k},\ldots,\hat{\psi}_{k,k})$; in practice, population expectations would need to be replaced by estimates. The resulting estimator never requires estimation of a conditional expectation of a function of a yet to be estimated $\psi_{mk}$ and is thus straightforward to compute. Let $\hat{\psi}^{(1)}$ denote the parameter estimate obtained from the cross-fitting procedure when $v_{mk}(\overline{l}_{m},\bar{a}_{m-1};\psi^*)$ is estimated as described above. To improve efficiency, one can then directly apply machine learning methods to construct an estimator of the nuisance pseudo-outcome regression function $\hat{v}_{m}(k,\overline{l}_{m},\overline{a}_{m-1};\hat{\psi}^{(1)})$ with $\hat{\psi}^{(1)}$ plugged in to construct the pseudo-outcomes, then construct a new cross-fit estimator $\tilde{\psi}^{(1)}$ using $\hat{v}_{m}(k,\overline{l}_{m},\overline{a}_{m-1};\hat{\psi}^{(1)})$.  See \citet{lewis2020double} for a related approach to estimation enabling lasso estimation of sparse high dimensional blip function parameters. If there is insufficient data to estimate variation independent blip function parameters at each time point, analysts might consider adding additional parametric restrictions, which we discuss in Remarks \ref{parametric_standard_remark} and \ref{linear_standard_remark} below. 
\end{remark}

The following theorem provides the basis of inference of $\psi^*$
\begin{theorem}\label{theorem_est}
 Suppose that the following conditions hold:
\begin{enumerate}
    \item The map $\psi \mapsto \mathbb{P}\left\{\sum^K_{m=0}U_m(\psi,s_m,\pi,\nu)\right\}$ is differentiable at $\psi^*$ uniformly in $(\pi,\nu)$. 
    \item The matrix 
    \[V(\psi^*,s,\pi^*)\equiv  -E\left\{\sum^K_{m=0} s_{m}(\overline{L}_{m},\overline{A}_{m-1})\{A_m-\pi^*_m(\overline{L}_{m},%
\overline{A}_{m-1})\}\frac{\partial}{\partial \psi}H^\dagger_{m}(\psi)|_{\psi=\psi^*}\right\}.\]
    is nonsingular.
    \item For $m=0,...,K$ and for a selected $p \times (K-m)$ function $s_m(\bar{L}_m,\bar{A}_{m-1})$, the maximum absolute value of any element in $s_m(\bar{L}_m,\bar{A}_{m-1})$ is upper bounded by $C$ with probability one, where $C<\infty$.
    \item $\hat{\psi}^{(s)}-\psi^*=o_{\mathbb{P}}(1)$, $\norm{\hat{\pi}^{(-s)}_{m} - \pi_{m}^*}=o_\mathbb{P}(1)$ and $\norm{\hat{\nu}^{(-s)}_{mk}-\nu_{mk}^*}=o_\mathbb{P}(1)$ for $m=0,...,K$, $k=m+1,...,K$ and $s=1,2$.
    \item For $m=0,...,K-1$, $k=m+1,...,K$ and $s=1,2$,
    \begin{align*}
    \norm{\hat{\pi}^{(-s)}_{m} - \pi_{m}^*}\norm{\hat{\nu}^{(-s)}_{mk}-\nu_{mk}^*}=& o_\mathbb{P}(N^{-1/2}).
    \end{align*}  
    \item For any fixed $(\pi,\nu)$, the class $\left\{\sum^K_{m=0}U_m(\psi,s_m,\pi,\nu):\psi \in \mathbb{R}^p\right\}$ is Donsker in $\psi$.
\end{enumerate}

Then $\hat{\psi}^{cf}$ satisfies the expansion
\begin{equation*}
\sqrt{N}(\hat{\psi}^{cf}-\psi^{*})=V(\psi^*,s,\pi^*)^{-1} \frac{1}{\sqrt{N}}\sum_{i=1}^N\sum_{m=0}^KU_m(O_i;\psi^*,s_m,\pi^*,\nu^*)+o_\mathbb{P}(1).
\end{equation*}
\end{theorem}
Hence we have shown that under fairly standard conditions in semiparametric causal inference, $\hat{\psi}^{cf}$ is an asymptotically linear estimator of $\psi^*$; furthermore, it can be straightforwardly shown that its asymptotic variance is equal to the variance of its influence function.



It follows from (\ref{H_cf}) and Theorems 3 and 4, that we can
also construct regular and asymptotically linear (RAL) plug-in estimators of other quantities of interest using $\gamma(\hat{\psi}^{cf})$. $\mathbb{P}_N[H_{0k}(\hat{\psi}^{cf})]$ is a RAL estimator for $%
E[Y_k(g)]$. $\mathbb{P}_N[H_{mk}(
\hat{\psi}^{cf})]$ is a RAL estimator for $E[Y_k(\bar{A}_{m-1},\underline{g}_m
)]$. $\mathbb{P}_N^{B}[H_{mk}(\hat{\psi}^{cf})]$ is
a RAL estimator for $E[Y_k(\bar{A}_{m-1},\underline{g}_m)|(\bar{A}_{m-1},\bar{L%
}_m)\in B]$ where $\mathbb{P}^{B}_N[\cdot]$ denotes sample average among subjects with $(\bar{A}_{m-1},\bar{L}_m)\in B$ and $B$ an event in the sample space of treatment and covariate history with positive probability. For example, $B$ could be the event that there was no treatment through $m-1$ and the average value of another covariate through time $m$ exceeds some threshold. Of course, we can estimate the effect in the treated for a particular history $(\bar{A}_{m},\bar{L}_m)=(\bar{a}_{m},\bar{l}_m)$ (\ref{blip}) directly by $\gamma_{mk}^*(\bar{%
l}_{m},\bar{a}_m;\hat{\psi}^{cf})$. 

\begin{remark}\textbf{Parametric nuisance models}\label{parametric_standard_remark}
Alternatively, one can specify parametric nuisance models for $\pi^*_m(\bar{L}_m,\bar{A}_{m-1})$ and $v_{m}^*(\bar{L}_m,\bar{A}_{m-1};\psi^*)$, respectively, and forego cross fitting. Under the assumptions of Theorem %
\ref{standard_id}, if at each $m=0,....,K-1$ and $k=m+1,...,K$, either a model for $\pi^*_m(\bar{L}_m,\bar{A}_{m-1})$ or a model for $v_{mk}^*(\bar{L}_m,\bar{A}_{m-1};\psi^*)$ is correctly specified, then the corresponding estimator based on the aforementioned estimating functions is consistent and asymptotically normal under standard regularity conditions \citep{newey1994large}.  Moreover, confidence intervals for $\psi^*$ and other derived quantities of interest can be computed via the nonparametric bootstrap.
\end{remark}

\begin{remark}{\textbf{Non-linear effects}\label{multi_cont}}
In the general case where $A_m$ is not necessarily binary or the treatment effect is non-linear in $A_m$, multiply robust estimation remains possible under a model for the conditional density/mass function $f(A_m|\bar{L}_m,\bar{A}_{m-1})$. For cases where the treatment effect is linear in some vector of transformations of $A_m$, then one can model the conditional mean of each component of the vector.
\end{remark}

\begin{remark}\textbf{Closed form estimation for linear models}\label{linear_standard_remark}
Suppose the blip model is linear in $\psi$ (i.e. $\gamma^*_{mk}(\bar{a}_m,\bar{l}_m;\psi^*) = a_m\psi^{*T}R_{mk}(\bar{a}_{m-1},\bar{l}_m)$ for $R_{mk}(\bar{a}_{m-1},\bar{l}_m)$ some transformation of history through time $m$ the dimension of $\psi^*$) and the nuisance model $v_{mk}(\overline{l}_{m},\bar{a}_{m-1};\phi)$ is specified to be linear in $\phi$ (i.e. $E[H_{mk}(\psi^*)-H_{m,k-1}(\psi^*)|\bar{L}_m,\bar{A}_{m-1}]=\phi^TD_{mk}(\bar{a}_{m-1},\bar{l}_m)$ for $D
_{mk}(\bar{a}_{m-1},\bar{l}_m)$ some transformation of history through time $m$ the dimension of $\phi$). Then the doubly robust estimator $(\hat{\psi},\hat{\phi})$ is available in closed form (without cross-fitting) as
\begin{equation}
(\hat{\psi},\hat{\phi})^T = \biggl[\sum_i\sum_{k>m}(Y_{ik}-Y_{ik-1})\begin{pmatrix}s_{im}X_{im}\\D_{imk}\end{pmatrix}\biggr]\biggl[\sum_i\sum_{k>m}(V_{imk}-V_{imk-1},D_{imk})\begin{pmatrix}s_{im}X_{im}\\D_{imk}\end{pmatrix}\biggr]^{-1},
\end{equation}
where $V_{imk}=\sum_{j=m}^{k-1} A_{ij}R_{jk}$, $X_{im} = A_{im}-\hat{\pi}_{im}(\bar{L}_{im},\bar{A}_{im-1})$, and $s_{im}$ is the usual index function with dimension equal to the dimension of $\psi$. 
\end{remark}

\section{\protect\normalsize Multiplicative SNMMs}\label{mult_snmm}
{\normalsize The SNMM framework also readily handles multiplicative effects
when the parallel trends assumption is assumed to hold on the additive
scale, a scenario that has been discussed in the DiD literature \citep{ciani2019dif}. Define the
multiplicative causal contrasts 
\begin{equation}  \label{mult_blip}
e^{\gamma^{g\times*}_{mk}(\bar{l}_m,\bar{a}_m)}\equiv \frac{E[Y_k(\bar{a}_m,%
\underline{g}_m)|\bar{A}_m=\bar{a}_m,\bar{L}_m=\bar{l}_m]}{E[Y_k(\bar{a}_{m-1},%
\underline{g}_m)|\bar{A}_{m}=\bar{a}_m,\bar{L}_m=\bar{l}_m]}
\end{equation}
for $k>m$. $\gamma^{g\times*}_{mk}(\bar{a}_m,\bar{l}_m)$ is the average
multiplicative effect at time $k$ among units with history $(\bar{A}_m=%
\bar{a}_m,\bar{L}_m=\bar{l}_m)$ of receiving treatment $a_m$ at time $m$ and
then following $g$ thereafter compared to following $g$ at time $m$ and
thereafter.}

{\normalsize A parametric multiplicative SNMM imposes functional forms on the
multiplicative blip functions $\gamma^{g\times*}_{mk}(\bar{A}_m,\bar{L}_m)$
for each $k> m$, i.e. 
\begin{equation}  \label{mult_snmm_eq}
e^{\gamma^{g\times*}_{mk}(\bar{a}_m,\bar{l}_m)} = e^{\gamma^{g\times}_{mk}(%
\bar{a}_m,\bar{l}_m;\psi_{\times}^*)}
\end{equation}
where $\psi_{\times}^*$ is an unknown parameter vector and $%
\gamma^{g\times}_{mk}(\bar{a}_m,\bar{l}_m;\psi_{\times})$ is a known function. }

{\normalsize We make the same parallel trends assumption (\ref%
{parallel_trends}) as in the additive setting.  Let 
\begin{equation}  \label{H_mult}
H^{\times}_{mk}(\gamma)\equiv Y_k exp\{-\sum_{j=m}^{k-1}\gamma_{jk}(%
\bar{A}_j,\bar{L}_j)\} \text{ for } k>m \text{ and } H^{\times}_{tt}\equiv
Y_t.
\end{equation}\citet{jamie1994,robins2004optimal} showed that $H_{mk}^{\times}(\gamma^{g\times*})$ evaluated at the true blip function
has the following important properties for all $k>m$: 
\begin{align}
\begin{split}  \label{mult_H_cf}
E[H_{mk}^{\times}(\gamma^{g\times*})|\bar{L}_m,\bar{A}_{m}] = E[Y_k(\bar{A}%
_{m-1},\underline{g}_m)|\bar{L}_m,\bar{A}_{m}]
\end{split}\\
\begin{split}
E[H_{0k}^{\times}(\gamma^{g\times*})] = E[Y_k(g)].
\end{split}%
\end{align}
}

{\normalsize By (\ref{mult_H_cf}) and the time-varying conditional parallel
trends assumption (\ref{parallel_trends}), it follows that 
\begin{align}
\begin{split}  \label{mult_id}
&E[H^{\times}_{mk}(\gamma^{g\times*}) - H^{\times}_{m,k-1}(\gamma^{g\times*})|%
\bar{A}_m=\bar{a}_m,\bar{L}_m] = \\
& E[H^{\times}_{mk}(\gamma^{g\times*}) - H^{\times}_{m,k-1}(\gamma^{g\times*})|%
\bar{A}_m=(\bar{a}_{m-1},g(\bar{L}_m,\bar{a}_{m-1})),\bar{L}_m] \\
& \forall k>m.
\end{split}%
\end{align}
}

{\normalsize That is, given the true multiplicative blip function, the
quantity $H^{\times}_{mk}(\gamma^{g\times*})$ behaves like the
counterfactual quantity $Y_k(\bar{A}_{m-1},\underline{g}_m)$ in that its
conditional expected trend does not depend on $A_m$. We can again exploit this
property to identify and construct doubly robust estimating equations for $%
\gamma^{g\times*}$ and various derivative quantities of interest.
Identification and estimation theorems and proofs are identical to Section (\ref{section_id_est}) with $H_{mk}^{\times}$ in place of $H_{mk}$, $\gamma^{g\times}$
in place of $\gamma^g$, $\psi_{\times}$ in place of $\psi$, and (\ref%
{mult_H_cf}) in place of (\ref{H_cf}). }

\section{Optimal Regime SNMMs}\label{section_general}

In this section, we assume that parallel trends (\ref{parallel_trends}) holds for all  $g\in\mathcal{G}$, for $\mathcal{G}$ the set of all treatment rules. Let $\bar{U}_K$ denote a time-varying unobserved confounder, possibly multivariate and containing continuous and/or discrete components, that together with observed covariates is sufficient to adjust for all confounding, i.e. 
\begin{equation}\label{alpha_seq_ex}
U-\textbf{Sequential Exchangeability: } A_m \independent \underline{Y}_{m+1}(\bar{a}_{K-1})|\bar{L}_m,\bar{A}_{m-1}=\bar{a}_{m-1},\bar{U}_m \text{ } \forall\text{ } m< K,\bar{a}_{K-1}.
\end{equation}
To identify the optimal dynamic treatment regime, we need the additional strong assumption that $U$ is not an effect modifier, which we formalize below.

\textbf{No Additive Effect Modification by U:} For all regimes $g\in\mathcal{G}$,\begin{align}
\begin{split}\label{no_mod}
&\gamma_{mk}^{g}(\bar{l}_m,\bar{a}_m)=\gamma_{mk}^{g}(\bar{l}_m,\bar{a}_m,\bar{u}_m)\equiv\\
&E[Y_k(\bar{a}_m,\underline{g}_{m+1}) - Y_k(\bar{a}_{m-1},\underline{g}_{m})|\bar{A}_m=\bar{a}_m,\bar{L}_m=\bar{l}_m,\bar{U}_m=\bar{u}_m].
\end{split}
\end{align}
In Appendix \ref{appendix_effect_mod}, we show that (\ref{no_mod}) practically, though not strictly mathematically, follows from (\ref{parallel_trends}) holding for all $g\in\mathcal{G}$.

Suppose we want to maximize the expectation of some utility $Y=\sum_{k=0}^K \tau_k Y_k$ which is a weighted sum of the outcomes we observe at each time step with weights $\tau_k$. Let $Y(g)=\sum_{k=0}^K \tau_kY_k(g)$ denote the counterfactual value of the utility under treatment regime $g$. We want to find $g^{opt}=argmax_g E[Y(g)]$. Under (\ref{alpha_seq_ex}) and (\ref{no_mod}), it follows from results in \citet{robins2004optimal} that $g^{opt}$ is given by the backward recursion: 
\small
\begin{align*}
\begin{split}
&g^{opt}_{K-1}(\bar{L}_{K-1},\bar{A}_{K-2})= argmax_{a_{K-1}} E[Y_{K}(\bar{A}_{K-2},a_{K-1})|\bar{L}_{K-1},\bar{A}_{K-2}]\\
&g^{opt}_{m-1}(\bar{L}_{m-1},\bar{A}_{m-2})= argmax_{a_{m-1}} \sum_{j>m} \tau_jE[Y_{j}(\bar{A}_{m-2},a_{m-1},\underline{g}^{opt}_{m})|\bar{L}_{m-1},\bar{A}_{m-2}].
\end{split}
\end{align*}
That is, the optimal treatment rule at each time step is the rule that maximizes the weighted sum of expected future counterfactual outcomes assuming that the optimal treatment rule is followed at all future time steps. Define the optimal regime blip function
\begin{equation}\label{opt_blip}
\gamma_{mk}^{opt*}(\bar{l}_m,\bar{a}_m)\equiv E[Y_k(\bar{a}_m,\underline{g}^{opt}_{m+1})-Y_k(\bar{a}_{m-1},0,\underline{g}^{opt}_{m+1})|\bar{A}_m=\bar{a}_m,\bar{L}_m=\bar{l}_m].
\end{equation}
This is the conditional effect of treatment level $a_m$ followed by the optimal regime compared to treatment level $0$ followed by the optimal regime in those receiving treatment level $a_m$. Under the assumption that unobserved confounders are not effect modifiers, (\ref{opt_blip}) does not depend on $A_m$, i.e.
\begin{equation}\label{opt_blip_no_am}
\gamma_{mk}^{opt*}(\bar{l}_m,\bar{a}_m)=E[Y_k(\bar{a}_m,\underline{g}^{opt}_{m+1})-Y_k(\bar{a}_{m-1},0,\underline{g}^{opt}_{m+1})|\bar{A}_{m-1}=\bar{a}_{m-1},\bar{L}_m=\bar{l}_m].
\end{equation}
In terms of this blip function, then,
\begin{equation}\label{g_opt_blip}
g^{opt}_m(\bar{L}_m,\bar{A}_{m-1}) = argmax_{a_m} \sum_{j> m} \tau_j \gamma_{mj}^{g^{opt}*}(\bar{L}_m,\bar{A}_{m-1},a_m). 
\end{equation}
Under parametric model $\gamma_{mj}^{g^{opt}*}(\bar{L}_m,\bar{A}_{m-1},a_m) = \gamma_{mj}^{g^{opt}*}(\bar{L}_m,\bar{A}_{m-1},a_m;\psi_{g^{opt}}^*)$, we can estimate $\psi_{g^{opt}}^*$ via g-estimation as follows. First, define
\begin{align*}
\begin{split}
H_{mk}^{opt}(\psi_{g^{opt}}) \equiv Y_k + \sum_{j=m}^k&\{\gamma_{jk}^{g^{opt}*}(\bar{L}_j,\bar{A}_{j-1},argmax_{a_j}\sum_{r=j}^K\tau_r\gamma_{jr}^{g^{opt}*}(\bar{L}_j,\bar{A}_{j-1},a_j;\psi_{g^{opt}});\psi_{g^{opt}}) -\\
&\gamma_{jk}^{g^{opt}*}(\bar{L}_j,\bar{A}_{j-1},A_j;\psi_{g^{opt}})\}.
\end{split}
\end{align*}
Given (\ref{opt_blip_no_am}), Lemma 3.2 of \citep{robins2004optimal} implies that, evaluating at the true parameter value,
\begin{equation}
    E[H_{mk}^{opt}(\gamma^{g^{opt}*})|\bar{L}_m,\bar{A}_m] = E[Y_k(\bar{a}_{m-1},\underline{g}^{opt})|\bar{L}_m,\bar{A}_m].
\end{equation}
Intuitively, this is because the effect of observed treatment compared to 0 treatment (followed by the optimal regime) is subtracted off, then the effect of optimal treatment relative to 0 treatment (again followed by the optimal regime) is added back on. Once again, we have the crucial property that $H_{mk}^{opt}(\psi_{g^{opt}}^*)$ behaves in conditional expectation like the counterfactual $Y_k(\bar{a}_{m-1},\underline{g}^{opt})$. Therefore, in particular, $H_{mk}^{opt}(\psi_{g^{opt}}^*)$ satisfies parallel trends:
\begin{align*}
    &E[H_{mk}^{opt}(\psi_{g^{opt}}^*)-H_{mk-1}^{opt}(\psi_{g^{opt}}^*)|\bar{A}_m=\bar{a}_m,\bar{L}_m]=\\
    &E[H_{mk}^{opt}(\psi_{g^{opt}}^*)-H_{mk-1}^{opt}(\psi_{g^{opt}}^*)|\bar{A}_{m-1}=\bar{a}_{m-1},\bar{L}_m].
\end{align*}
Thus, again applying the logic of g-estimation, we can construct consistent and asymptotically normal estimators of $\psi_{g^{opt}}^*$ as in Section \ref{estimation}, with $H_{mk}^{opt}(\psi_{g^{opt}})$ in place of $H_{mk}(\psi_g)$. We can further consistently estimate the optimal treatment rule as 
\begin{equation}
\hat{g}^{opt}(\bar{L}_m,\bar{A}_{m-1}) =  argmax_{a_m}\sum_{j\geq m}^K \tau_j \gamma_{mj}^{g^{opt}}(\bar{L}_m,\bar{A}_{m-1},a_m;\hat{\psi}_{g^{opt}}).
\end{equation} 
We can also consistently estimate the expected value of the counterfactual utility $E[Y(g^{opt})]$ under the optimal regime by \sloppy$\hat{E}[Y(g^{opt})]=\sum_{k=0}^K \tau_k \mathbb{P}[H_{0k}^{opt}(\hat{\psi}_{g^{opt}})]$.

\section{Real Data Applications}\label{section_real_data}
We illustrate our approach with applications to real data. We first fit an SNMM to model conditional effects of Medicaid expansion on county level uninsurance rates given past unemployment rates. This is a staggered adoption application highlighting the ability of SNMMs to characterize time-varying effect heterogeneity. We also fit an SNMM to model effects of floods on flood insurance take-up using data previously analyzed by \citet{gallagher2014learning}. In this application, the blip function itself is directly of interest, as we estimate lasting effects of a `blip' of exposure, i.e. a flood. Finally, we use an SNMM to estimate the effects of sustained changes in `growing degree days' on crop profitability at the county level using data previously analyzed by \citet{deschenes2012economic}. This example highlights the ability of SNMMs to estimate effects of sustained interventions and effects of continuous valued treatments when there is no shared baseline. We specified parametric nuisance models for all of these applications to illustrate simple implementations. Code and data for these analyses can be found at https://github.com/zshahn/did\_snmm.

\subsection{Impact of Medicaid Expansion on Uninsurance Rates }\label{medicaid}
One of the Affordable Care Act's primary goals was to provide increased insurance coverage through the expansion of Medicaid eligibility. Originally a nationwide policy, the choice to expand Medicaid using federal subsidies was ultimately left to individual states, leading to its staggered adoption beginning in 2014. We estimate how the effect of Medicaid expansion on county level uninsurance rates depends on county level unemployment rate and population. Table \ref{tab:bank} summarizes treatment timing over the period 2014-2019 considered in the study. 
\begin{table}[h!]
\begin{center}
\begin{tabular}{|c|c|} 
 \hline
 Year & $\#$ Counties in States First Expanding Medicaid\\ \hline
 2014 & 1216 \\ \hline
 2015 & 113 \\ \hline 
 2016 & 174 \\ \hline
 2017 & 64 \\ \hline
 2018 & 0 \\ \hline
 2019 & 151 \\ \hline
 Never & 1421 \\
 \hline
\end{tabular}
 \caption{Staggered Medicaid adoption by county}
 \label{tab:bank}
\end{center}
\end{table}
Because this is a staggered adoption setting, we coded treatment as in Remark \ref{staggered}, with $A_m=1$ only if a county's state first expanded Medicaid in year $m$ and $A_m=0$ otherwise. We also took $g=\bar{0}$ to be the untreated regime.

We specified the SNMM  
\begin{equation*}
    E[Y_k - Y_k(\bar{0})|\bar{A}_{m}=(\bar{0}_{m-1},1),\bar{L}_m]=\psi_{0mk} + \psi_{pop} L^{pop}_m+\psi_{unemp}L_m^{unemp},
\end{equation*}
 which posits that the effect varies flexibly with time of expansion and time since expansion through parameters $\psi_{0mk}$ and linearly with the previous year's unemployment rate and 2013 log population through parameters $\psi_{pop}$ and $\psi_{unemp}$. The parallel trends assumption that $E[Y_k(\bar{0})-Y_{k-1}(\bar{0})|\bar{A}_{m}=(\bar{0}_{m-1},1),\bar{L}_m]=E[Y_k(\bar{0})-Y_{k-1}(\bar{0})|\bar{A}_m=\bar{0},\bar{L}_m]$ for all $m$ and $k$ states that future expected uninsurance rate trends under no expansion are similar for counties with similar covariate history regardless of whether their state expanded in year $m$. We specified nuisance models $logit^{-1}(Pr(A_m=1|\bar{A}_{m-1}=0,\bar{L}_m))=\beta_{0m} + \beta_{1m} L^{unemp}_m + \beta_{2m} L^{pop}_m +\beta_{3m}L^{unemp}_mL^{pop}_m+\beta_4(L^{unemp}_m)^2+\beta_5(L^{pop}_m)^2$ and $E[H_{mk}(\psi)-H_{m,k-1}(\psi)|\bar{A}_{m-1}=0,\bar{L}_m]=\lambda_{0mk} + \lambda_{1mk}^T L_m$. We obtained point estimates for the SNMM parameters using the closed form linear estimator from Remark \ref{linear_standard_remark} and estimated standard errors via bootstrap.

Figure \ref{bank_figure1} displays the expected time-varying heterogeneous effects of Medicaid expansion on uninsurance rates for expanding counties of median log population (10.2) at the first quartile versus the third quartile unemployment rate. Expansion was estimated to be more impactful in counties with higher unemployment rates. The estimated positive coefficient of log population $\hat{\psi}_{pop}$ = 0.80 (95\% CI =[0.56, 1.04]) suggests that the impact of Medicaid expansion on uninsurance was greater (i.e. `more negative') in areas with lower population.

\begin{figure}[h]
\centering
\includegraphics[scale=.75]{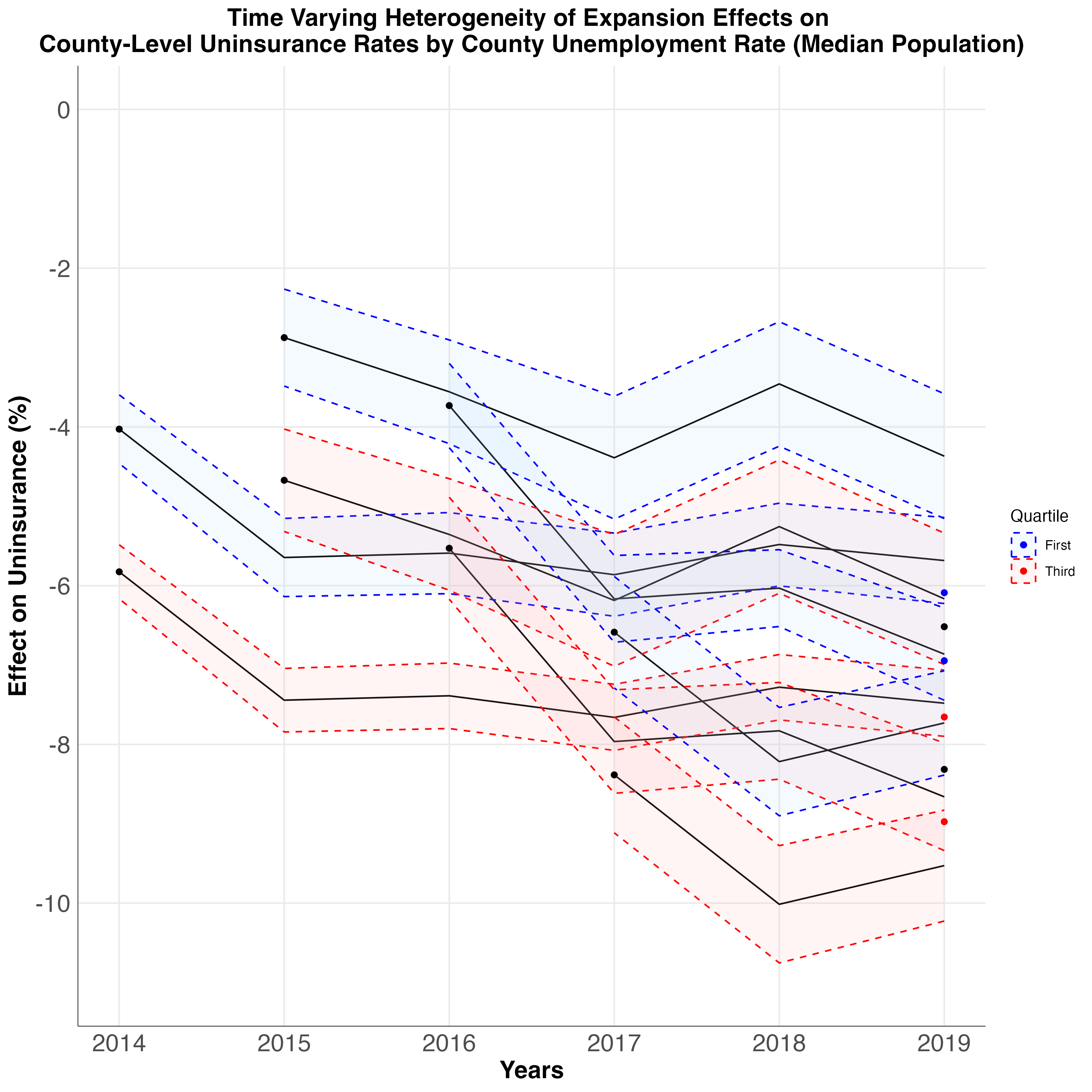}  
\caption{Estimated expected heterogeneous effects (and 95\% confidence intervals) of Medicaid expansion on uninsurance rates in an expanding median population county. Each line represents expected ongoing effects of an expansion at the year the line originates in counties with pre-expansion unemployment rate corresponding to the color of the line. Red lines (effects of expansions in higher unemployment counties)  indicate greater uninsurance reduction effects than blue lines (effects of expansions in lower unemployment counties) originating in the same year. The graph also shows how estimated effects vary by year of expansion and years since expansion. }
\label{bank_figure1}
\end{figure}


\subsection{Impact of Floods on Flood Insurance}\label{floods}
\citet{gallagher2014learning} used a fixed effects regression model to look at effects of floods on flood insurance coverage at the county level. He argued that each county's flood risk is constant over time. We fit a SNMM under the assumption of parallel trends in insurance coverage absent future floods in counties with similar flood history from 1958. We specified a parametric linear blip model $\gamma_{mk}^*(\bar{l}_m,\bar{a}_m)=a_m(1,m-1980,k-m,(k-m)^2,rate_{m-1})^T\psi$ for $g=\bar{0}$, where $rate_{m-1}$ denotes the county's proportion of flood years since 1958. We specified nuisance models $E[A_m|\bar{A}_{m-1},\bar{L}_m]=\beta_{m0} + \beta_{m1} rate_{m-1}$ and $E[H_{mk}(\psi)-H_{m,k-1}(\psi)|\bar{A}_{m-1},\bar{L}_m]=\lambda_{mk0} + \lambda_{mk1} rate_{m-1}$. We obtained blip model parameter estimates via the closed form linear estimators of Remark \ref{linear_standard_remark} and estimated standard errors via bootstrap. Each line in Figure \ref{figure_flood} depicts the estimated effect of a flood occurring at its leftmost time point followed by no further floods on flood insurance uptake over the subsequent 15 years for a county with the median historical flood rate, i.e. $\gamma_{mk}^*(rate_{m-1}=rate_{median},a_m=1;\hat{\psi})$ for $k\in m,\ldots,m+15$. These quantities were directly extracted from our blip function estimates. We see that there is an initial surge in uptake followed by a steep decline, and the estimated initial surge is larger for more recent floods. We did not find statistically significant effect heterogeneity as a function of historical county flood rate. Gallagher (2014) obtained qualitatively similar results and argued that most of the decline in the effect of a flood is due to residents forgetting about it as opposed to migration. It might be interesting to explore other blip model specifications, perhaps conditioning on further aspects of flood history such as years since previous flood or on average flood insurance premiums in the area.

\begin{figure}[h]
\centering
\includegraphics[scale=.5]{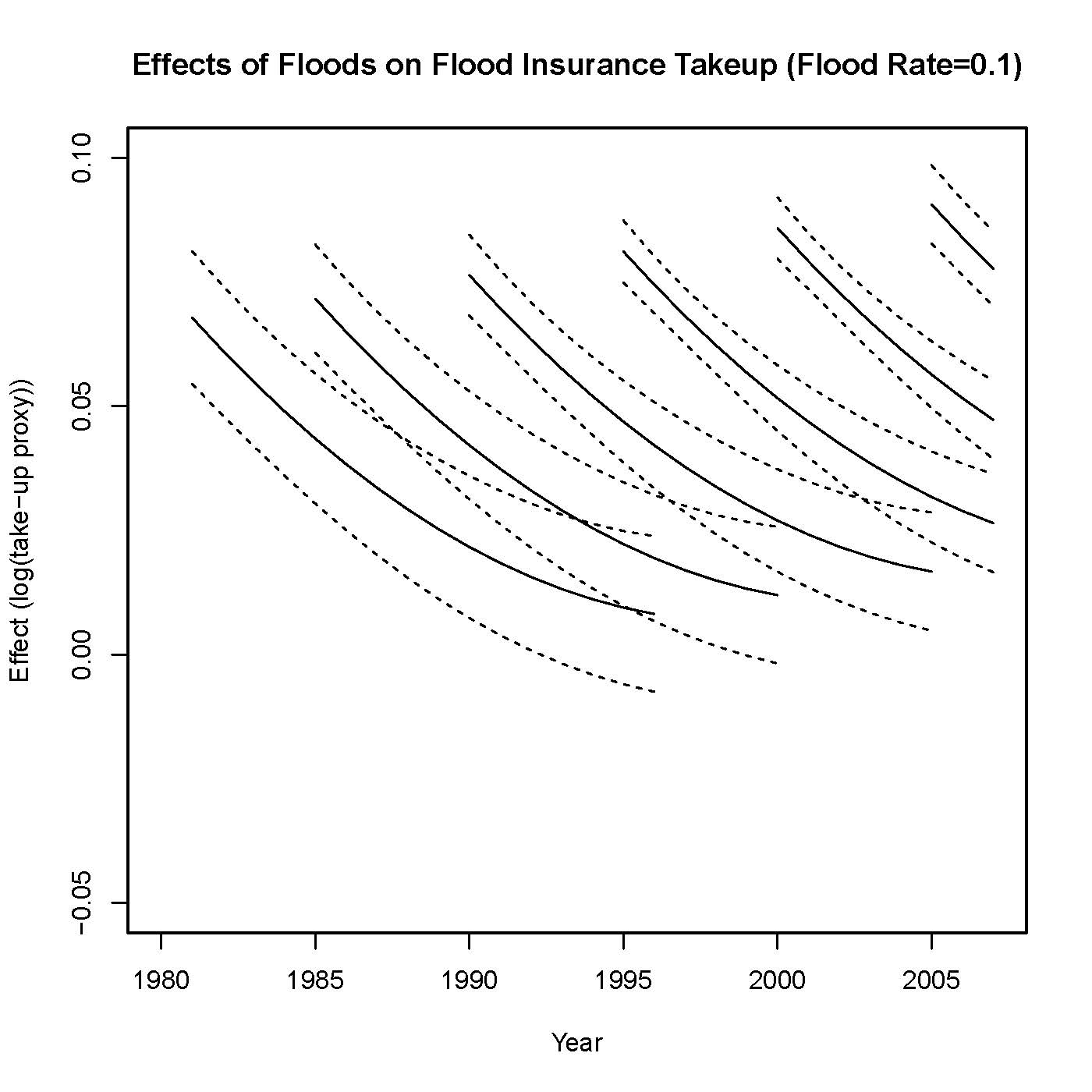}  
\caption{Each line depicts the estimated effect of a flood occurring at its leftmost time point followed by no further floods on flood insurance uptake over the subsequent 15 years for a county with the median historical flood rate, i.e. $\gamma_{mk}^*(rate_{m-1}=rate_{median},a_m=1;\hat{\psi})$ for $k\in m,\ldots,m+15$. These quantities were directly extracted from our blip function estimates.}
\label{figure_flood}
\end{figure}

\subsection{Effects of Sustained Temperature Changes on Crop Profitability}\label{crops}

\citet{deschenes2012economic} analyzed the relationship between growing degree days (GDD) and crop profits per acre at the county level using two way fixed effects regression. (GDD is essentially a composite measure of temperature, truncated at an upper limit each day over the growing season.) They used data from 2,342 counties collected every five years from 1987 through 2002. Two way fixed effects is now known to impose unwanted effect homogeneity assumptions \citep{de2020two}, but \citet{de2024continuous} point out that parallel trends assumptions from the existing DiD literature do not easily accommodate the continuous treatment in this application because counties do not share a baseline level. Furthermore, the exposure changes value repeatedly, another challenge in the DiD literature. (See Figure \ref{crop_figure} for illustration.) We reanalyze the data to estimate effects of a sustained change in GDD on crop profits given history of growing degree days using an SNMM under our conditional parallel trends assumption. 

\begin{figure}[h]
\centering
\includegraphics[scale=.65]{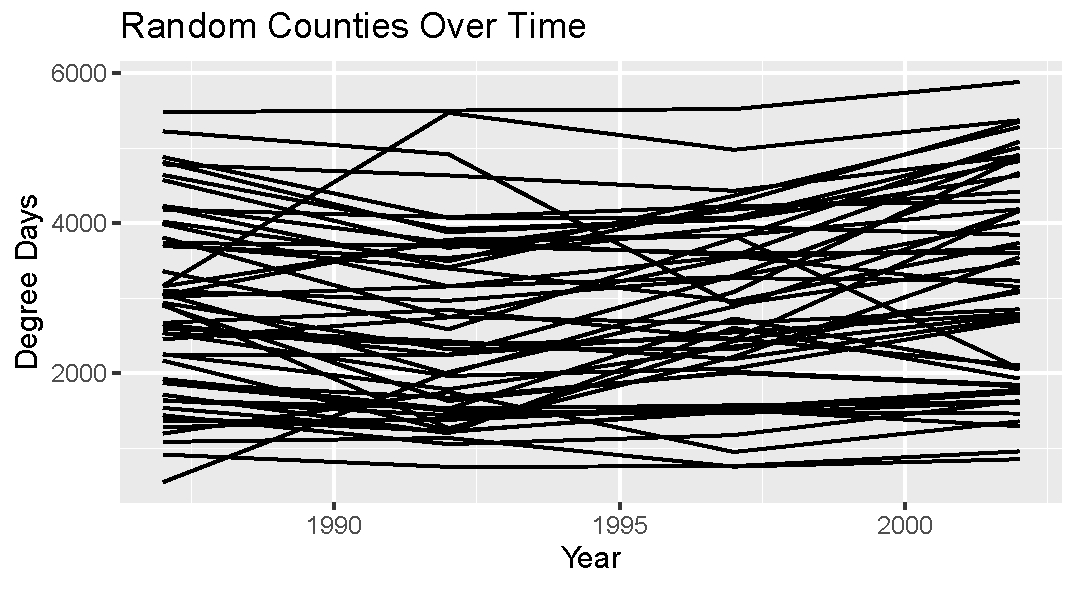}  
\caption{GDD trajectories for 100 random counties from 1987 through 2002. There is not a shared baseline level and values jump repeatedly over time.}
\label{crop_figure}
\end{figure}

Letting $A_m$ denote GDD at measurement $m$. We specify a parametric general regime SNMM
\begin{align}
\begin{split}\label{crop_snmm}
    \gamma_{mk}^*(\bar{a}_m)=E[Y_k(\bar{a}_m,\underline{g})-Y_k(\bar{a}_{m-1},\underline{g})|\bar{A}_m=\bar{a}_m]\\
    =\psi_1a_m+\psi_2a_ma_{m-1}+\psi_3a_m(k-m)+\psi_4a_ma_{m-1}(k-m)+\psi_5a_mm
\end{split}
\end{align}
for $g$ the `keep treatment fixed' regime (\ref{continue_g}).
 (\ref{crop_snmm}) is a model for the effect of a sustained change in GDD to level $a_m$ starting in year $m$ on crop profits in year $k$ compared to permanently fixing GDD at $a_{m-1}$, given the history of GDD through time $m$. The parametric model allows this effect to depend on the new GDD level $a_m$, the prior GDD level $a_{m-1}$, the time $k-m$ since the change, and the time $m$ of the change. It imposes linearity assumptions and the constraint that the effect of the GDD change does not depend on GDD history beyond its prior level. Of course, alternative parametric models could be specified, and we note that the model specification can become arbitrarily flexible as sample size increases. 

We make the parallel trends assumption (\ref{parallel_trends}) that for each time point $m$, in counties with similar GDD histories through $m-1$, future counterfactual crop profit trends fixing GDD at its time $m-1$ level are mean independent of actual GDD at $m$. We specified parametric nuisance models $E[A_m|\bar{A}_{m-1}]=\beta_m^T(1,A_0,A_{m-1},A_0A_{m-1})$ and $E[H_{mk}(\psi)-H_{m,k-1}(\psi)|\bar{A}_{m-1}]=\lambda_{mk0}+\lambda_1A_0+\lambda_2A_{m-1}+\lambda_3A_0A_{m-1}$. We then estimated the blip model parameters using the closed form estimator from Remark \ref{linear_standard_remark} and estimated standard errors by bootstrap. 

Under our assumptions, we can directly query our estimated SNMM to obtain effect estimates of interest. Table \ref{tab:crop_results} contains estimated effects of a sustained increase of 1000 GDD in 1992 compared to keeping GDD at 1987 levels on crop profitability in 1997 and 2002 in counties with various levels of 1987 GDD. Effects were estimated to be larger when the increase was sustained for longer and in counties with higher pre-increase GDD levels.

\begin{table}[h]
    \centering
    \begin{tabular}{|c|c|c|}
        \hline
        \multirow{2}{*}{$GDD_{1987}$} & \multicolumn{2}{c|}{$\gamma_{1992,k}(a_{1992}=GDD_{1987}+1000,a_{1987}=GDD_{1987};\hat{\psi})$} \\
        \cline{2-3}
                                   & k=1997 & k=2002 \\
        \hline
        1000 & 6.1 (-4.9, 17.2)
 & 3.7 (-9.8, 17.2)
 \\
        2000 & 6.3 (-0.7, 13.2)
 & 7.6 (-1.4, 16.5)
 \\
        3000 & 6.4 (2.0, 10.8)
 & 11.4 (4.6, 18.2)
 \\
        4000 & 6.5 (0.4, 12.6)
 & 15.3 (6.4, 24.1)\\
        \hline
    \end{tabular}
    \caption{SNMM (\ref{crop_snmm}) estimated effects of a sustained increase of 1000 GDD from 1987 to 1992 on crop profits per acre in 1997 and 2002 for various baseline GDD levels in 1987.}
    \label{tab:crop_results}
\end{table}

\section{Sensitivity Analysis}\label{section_sens}
Conditional parallel trends assumptions are strong and untestable, and sensitivity analysis for violations of the parallel trends assumption is therefore desirable. We adapt the approach to sensitivity analysis for unobserved confounding in SNMMs of \citet{robins2000sensitivity} and \citet{robins2004optimal} to sensitivity analysis for non-parallel trends. We describe a general class of bias functions characterizing deviations from parallel trends given covariate history. For any particular bias function from this class, we provide a corresponding unbiased estimate of SNMM parameters assuming that the bias function is correctly specified. An analyst can then execute a sensitivity analysis by specifying a plausible range of bias functions (e.g. a grid of parameters covering a plausible range within a parametric subclass of bias functions) and producing the corresponding range of plausible effect estimates. This approach to sensitivity analysis is complementary to that developed by \citet{rambachan2023more}, as it allows deviations to depend on time-varying covariates and allows for sensitivity analysis of all SNMM parameters (e.g. those characterizing effect heterogeneity) and derived quantities. 

Define
\begin{align}
\begin{split}\label{bias_function}
&c_{mk}^g(\bar{l}_m,
\bar{a}_m)=\\
&E[Y_k(\bar{a}_{m-1},\underline{g}_m)-Y_{k-1}(\bar{a}_{m-1},\underline{g}_m)|\bar{L}_m=\bar{l}_m,\bar{A}_m=\bar{a}_m] - \\
&E[Y_k(\bar{a}_{m-1},\underline{g}_m)-Y_{k-1}(\bar{a}_{m-1},\underline{g}_m)|\bar{L}_m=\bar{l}_m,\bar{A}_m=(\bar{a}_{m-1}, g(\bar{l}_m,\bar{a}_{m-1}))].
\end{split}
\end{align}
$c_{mk}^g(\bar{l}_m,\bar{A}_m)$ characterizes the magnitude of deviation from the parallel trends assumption (\ref{parallel_trends}). It is a general function in that it allows deviations to depend on treatment and covariate history, treatment time $m$, and outcome time $k$. If parallel trends holds or $a_m=g(\bar{l}_m,\bar{a}_{m-1})$, then $c^g_{mk}(\bar{l}_m,\bar{a}_m)=0$.

Given a bias function (\ref{bias_function}), define the bias adjusted version of the `blipped down' quantity (\ref{H}) as
\begin{equation}\label{H_bias}
H_{mk}^{a}(\gamma^g)\equiv H_{mk}(\gamma^g)-c^g_{mk}(\bar{L}_m,\bar{A}_m),
\end{equation} 
where $H_{mk}$ is defined in (\ref{H}). 

\begin{lemma}\label{sens_lemma}
If (\ref{bias_function}) is correctly specified, then 
\begin{align}
\begin{split}\label{H_id_bias}
&E[H_{mk}^{a}(\gamma^{g*})-H_{mk-1}(\gamma^{g*})|\bar{L}_m,\bar{A}_m]\\
&=E[H_{mk}^{a}(\gamma^{g*})-H_{mk-1}(\gamma^{g*})|\bar{L}_m,\bar{A}_{m-1}].
\end{split}
\end{align}
\end{lemma}
\begin{proof}
See Appendix \ref{appendix_sens}.
\end{proof}
Lemma \ref{sens_lemma} states that under correct specification of the bias function (\ref{bias_function}), the conditional expectation of $H_{mk}^{a}(\gamma^{g*})-H_{mk-1}(\gamma^{g*})$ does not depend on $A_m$. This is the same crucial property satisfied by $H_{mk}(\gamma^{g*})-H_{mk-1}(\gamma^{g*})$ in (\ref{id}) that enabled identification of $\gamma^{g*}$. Thus, it follows that identification and estimation of $\gamma^{g*}$ under bias function (\ref{bias_function}) may proceed exactly as identification and estimation of $\gamma^{g*}$ under the parallel trends assumption (\ref{parallel_trends}) except substituting $H_{mk}^{a}(\gamma^g)-H_{m,k-1}(\gamma^{g})$ for $H_{mk}(\gamma^g)-H_{m,k-1}(\gamma^{g})$. Future work should explore suitable parameterizations of the bias function for tractable and informative sensitivity analysis.

\section{Relation to Other Work}\label{section_other_work}
\subsection{Estimands}
Most time-varying DiD work emphasizes effects of so-called `staggered adoption' strategies. In the staggered adoption setting, all units start at a common baseline level of treatment and then deviate from that baseline level at different times in a staggered fashion. In the simplest and most commonly considered case, the baseline treatment level is 0 or ‘untreated’, and units then initiate a binary treatment at different times. For times $k>m$, interest centers on $E[Y_k-Y_k(\bar{0})|\bar{A}_m=(\bar{0}_{m-1},1)]$, the effect of starting treatment at time $m$ compared to never starting treatment on the outcome at time $k$ among units that actually started treatment at time $m$. \citet{callaway2021difference} call these effects $ATT(m,k)$. We mentioned in Remark \ref{derived_estimands} that these estimands are identified in terms of SNMM parameters as $E[\gamma_{mk}^g(\bar{a}_m=(\bar{0}_{m-1},1),\bar{L}_m)|\bar{A}_m=(\bar{0}_{m-1},1)]$ for $g=\bar{0}$ and treatment coded as in Remark \ref{staggered}. Thus, SNMMs can target the same $ATT(m,k)$ estimands as standard DiD approaches and also model their heterogeneity as a function of time-varying covariates. 

Some work in the time-varying DiD literature has studied effects of interventions beyond initiation of a binary treatment. \citet{de2024difference}, for example, consider effects of initial changes in treatment level when not all units begin at the same baseline treatment and treatment may be non-binary. \citet{callaway2024difference} consider effects of initiation of continuous valued treatments from a shared baseline treatment. SNMMs of continuous valued treatments under parallel trends assumptions conditional on treatment history can straightforwardly model these effects as well. 

When treatment changes value repeatedly (e.g. switches on and off) in the data, $ATT(m,k)$ effects are marginal over post-initiation treatment patterns under the observational regime, similar to intention to treat effects in randomized trials with imperfect compliance. Just as per-protocol effects are often of interest in trials with non-compliance, effects of sustained interventions are often of interest in DiD settings. However, apart from \citet{renson2023identifying}, we are not aware of methods estimating these effects under parallel trends assumptions (see Section 3.4 of \citet{roth2023}). Effects of dynamic treatment strategies that may depend on covariates have also not been addressed under parallel trends except by \citet{renson2023identifying}.

In concurrent and cross-citing work, \citet{renson2023identifying} showed that a variation on the g-formula identifies $E[Y(g)]$ under parallel trends assumption (\ref{parallel_trends}). They proposed inverse probability weighted, outcome-regression based, and doubly robust estimators for this functional distinct from our own. The primary advantage offered by SNMMs compared to their estimators is the ability to model time-varying effect heterogeneity. SNMMs further enable simultaneous estimation of effects of a range of interventions. For example, in our crop yield application in Section \ref{crops}, we are able to jointly model the effects of a sustained change from any continuous valued GDD level to any other at any time assuming parallel trends (\ref{parallel_trends}) under the `keep treatment fixed' regime (\ref{continue_g}). \citet{renson2023identifying}, however, only provide methods to estimate the effect of a single GDD trajectory at a time. The primary drawback of SNMMs relative to the methods of \citet{renson2023identifying} is the requirement in practice to specify a parametric SNMM.

In other near concurrent work cross-citing our own, \cite{blackwell2024estimating} consider estimation of controlled direct effects under parallel trends. However, they consider a different data structure lacking an intermediate outcome measurement and accordingly require different assumptions.

Finally, our work is the only work we are aware of that leverages parallel trends assumptions to estimate optimal dynamic regimes as in Section \ref{section_general}. While we also require the extremely strong assumption of no effect modification by unobserved confounders, the method is one of the few \citep{zhang2024identification,han2021identification} to enable reinforcement learning in the presence of unobserved confounding.

\subsection{Parallel Trends Assumptions}
In the staggered adoption setting, the `never treated' parallel trends assumption from \citet{callaway2021difference} states that:
\begin{align}
\begin{split}\label{pt_cands}
     &\mathbb{E}\{Y_k(\bar{0})-Y_{k-1}(\bar{0})|\bar{L}_m,\bar{A}_{m-1}=\bar{0}_{m-1},A_m=1\}\\&=\mathbb{E}\{Y_k(\bar{0})-Y_{k-1}(\bar{0})|\bar{L}_m,\bar{A}_K=\bar{0}\}.
     \end{split}
\end{align}
(Here, we have transported their assumption to our notation. They denote their covariates by `$X$' and state that $X$ comprises `pre-treatment' information. They do not explicitly indicate with a time subscript that $X$ might evolve with time prior to treatment. However, they say that their framework allows for `potentially dynamic treatment selection' and their software allows users to input time-varying $X$ \citep{callaway2021getting}. We are interested in considering the setting where there are time-varying covariates, and the never-treated parallel trends assumption is made conditional on their evolving pre-treatment history.) In words, the `never treated' parallel trends assumption says that, given covariate history through time $m$, the expected counterfactual untreated future outcome trends in units that started treatment at time $m$ are equal to the expected outcome trends in units that in actuality never initiate treatment. This assumption is implausible if $\bar{L}$ contains a time-varying confounder $\bar Z$ of the trends. The issue is that the distribution of $\underline{Z}_{m+1}(\bar{0})$ in the never treated units comprising the conditioning event on the right hand side of (\ref{pt_cands}) would tend to differ from the distribution of $\underline{Z}_{m+1}(\bar{0})$ in the units comprising the conditioning event of the left hand side of (\ref{pt_cands}). This is because, by virtue of $\bar{Z}$ being a confounder, future values of the covariate $\underline{Z}_{m+1}(\bar{0})$ are associated with the future values of the treatment $\underline{A}_{m+1}$ that are conditioned on only in the never treated control group. As $\bar{Z}$ is a confounder of the trends, the trends will therefore also differ across the two groups, violating the assumption. See Figure \ref{fig:tree} in Appendix \ref{sec:appendix_tree} for a schematic illustrating this intuition.

It is straightforward to show that in the presence of a time-varying confounder $\bar Z$ contained in $\bar L$, whenever Assumption \ref{parallel_trends} holds, (\ref{pt_cands}) fails. Under Assumption \ref{parallel_trends}, the left hand side of (\ref{pt_cands}) is equal to $E[Y_k(\bar{0})-Y_{k-1}(\bar{0})|\bar{A}_m=0,\bar{L}_m]$. However, this quantity is not equal to the right hand side of (\ref{pt_cands}). This is because the conditioning event on the right hand side additionally contains $\underline{A}_{m+1}=0$, and $\underline{A}_{m+1}$ is not conditionally mean independent of $Y_k(\bar{0})-Y_{k-1}(\bar{0})$ given $(\bar{A}_m=0,\bar{L}_m)$ as it is confounded by $\underline{Z}_{m+1}$. The Single World Intervention Graph (SWIG) \citep{richardson2013single} in Figure \ref{fig:swig_sim} describes a simple data generating process with treatments at two time points in which Assumption \ref{parallel_trends} holds and (\ref{pt_cands}) fails, so long as it is the case that if conditional independence between two variables is not implied by the graph then they are conditionally mean dependent. The dependence of $Y_2(\bar{0})-Y_{1}(0)$ (denoted $\Delta_2(\bar{0})$ in the SWIG) on $A_{1}$ can be read off the SWIG. (We will later describe a simulation based on this SWIG.) 

Furthermore, if Assumption \ref{parallel_trends} fails, then (\ref{pt_cands}) will only hold if two loosely related quantities are coincidentally exactly equal. In particular, if Assumption \ref{parallel_trends} fails for some $\bar{l}_m$ and $k>m$, then let $c$ denote the magnitude of the violation, i.e. $c=E[Y_k(\bar{0})-Y_{k-1}(\bar{0})|\bar{A}_{m-1}=\bar{0},A_m=1,\bar{L}_m=\bar{l}_m]-E[Y_k(\bar{0})-Y_{k-1}(\bar{0})|\bar{A}_{m}=\bar{0},\bar{L}_m=\bar{l}_m]$. Let $c'$ denote the conditional mean association between $\underline{A}_{m+1}$ and $Y_k(\bar{0})-Y_{k-1}(\bar{0})$, i.e. $E[Y_k(\bar{0})-Y_{k-1}(\bar{0})|\bar{A}_{m}=\bar{0},\bar{L}_m=\bar{l}_m]-E[Y_k(\bar{0})-Y_{k-1}(\bar{0})|\bar{A}_{K}=\bar{0},\bar{L}_m=\bar{l}_m]$. For (\ref{pt_cands}) to hold, $c$ must equal $c'$. We cannot imagine any plausible mechanism linking and equalizing these largely unrelated quantities. Thus, we would consider it a striking coincidence if (\ref{pt_cands}) were to hold in the presence of a time-varying confounder. We note that the above discussion applies not just to the `never treated' parallel trends assumption but also to alternative assumptions that also condition on future treatment values, such as the `not yet treated' [by the time of the outcome] assumption \citep{callaway2021difference}.

In the setting where all covariates are time-invariant, however, Assumption \ref{parallel_trends} implies (\ref{pt_cands}) with $\bar L=Z_0$. The SWIG in Figure \ref{fig:swig_time_invariant} depicts conditions under which Assumption (\ref{parallel_trends}) holds in a two time point setting with only time-invariant confounding. It can be read off this SWIG that $A_1\independent Y_2(\bar{0})-Y_1(0)|Z_0,A_0=0$, where $Y_2(\bar{0})-Y_1(0)$ is again denoted by $\Delta_2(\bar{0})$ in the SWIG. Therefore, the right hand sides of the two parallel trends assumptions (which already share a left hand side) are equal, i.e. $E[Y_2(\bar{0})-Y_1(0)|Z_0,A_0=0]=E[Y_2(\bar{0})-Y_1(0)|Z_0,\bar{A}_1=0]$, implying that the assumptions are equivalent if this SWIG describes the data generating process. If Assumption \ref{parallel_trends} fails, then, as before, (\ref{pt_cands}) can only hold under a coincidental equality. 

\begin{figure}[h]
    \centering
    \includegraphics[scale=0.7]{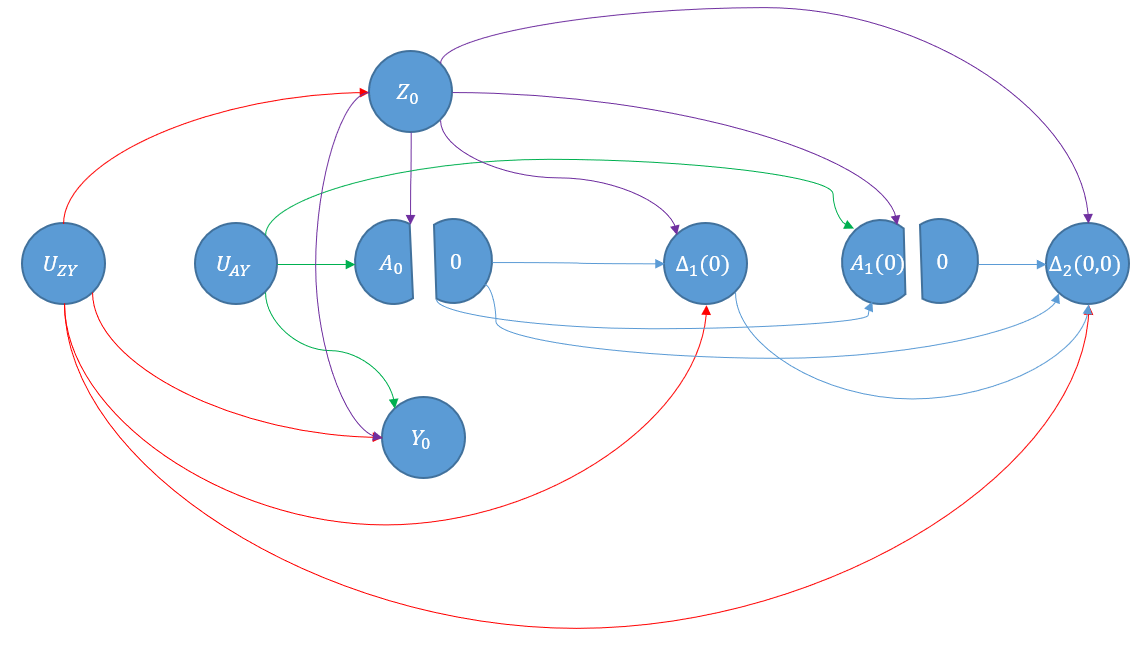}
    \caption{Single World Intervention Graph of a data generating process with only time-invariant confounding of trends in which Assumption \ref{parallel_trends} holds. The absence of arrows from $U_{AY}$ into the counterfactual increments $\Delta_t(\bar{0})$ is scale dependent. If the $Y_0\rightarrow A_1(0)$ is removed, then (\ref{pt_cands}) also holds as the SWIG implies that $A_1\independent \Delta_2(\bar{0})|A_0=0,L_0$, which means that conditioning on future treatment values does not break parallel trends.}
    \label{fig:swig_time_invariant}
\end{figure}

We illustrate aspects of the above discussion with a simple simulation in two time points. Consider the following data generating process (DGP):
\begin{align*}
    U\sim N(0,1); 
    Y_0\sim N(U,1); Z_0\sim Bernoulli(0.5); A_0\sim Bernoulli(logit^{-1}(Z_0+U))\\
    Y_1(0) \sim N(2Z_0+U); Z_1(0)\sim Bernoulli(logit^{-1}(-0.5+Z_0)); A_1\sim Bernoulli(logit^{-1}(Z_1(0)+U))\\
    Y_2(0,0)\sim N(Z_0+Z_1(0)+U,1).
\end{align*}
Let $\bar L_m = \bar Z_m$ at each time $m$. In this DGP, $\bar Z$ is an observed time-varying confounder of trends, i.e. it influences treatment and trends. The unobserved baseline confounder $U$ enters additively into the means of the counterfactual untreated outcomes. Thus, we might intuitively expect parallel trends assumptions to hold conditional on observed covariate history. The SWIG for this data generating process is depicted in Figure \ref{fig:swig_sim}. One can read off the SWIG, which incorporates the scale dependent fact that $U$ cancels out of the differences, that Assumption \ref{parallel_trends} holds. We generated 1,000,000 observations from this DGP (enough that sampling variability was negligible) and evaluated our parallel trends Assumption \ref{parallel_trends} and the `never treated' parallel trends assumption (\ref{pt_cands}) conditional on covariate history.

\begin{figure}[h]
    \centering
    \includegraphics[scale=0.75]{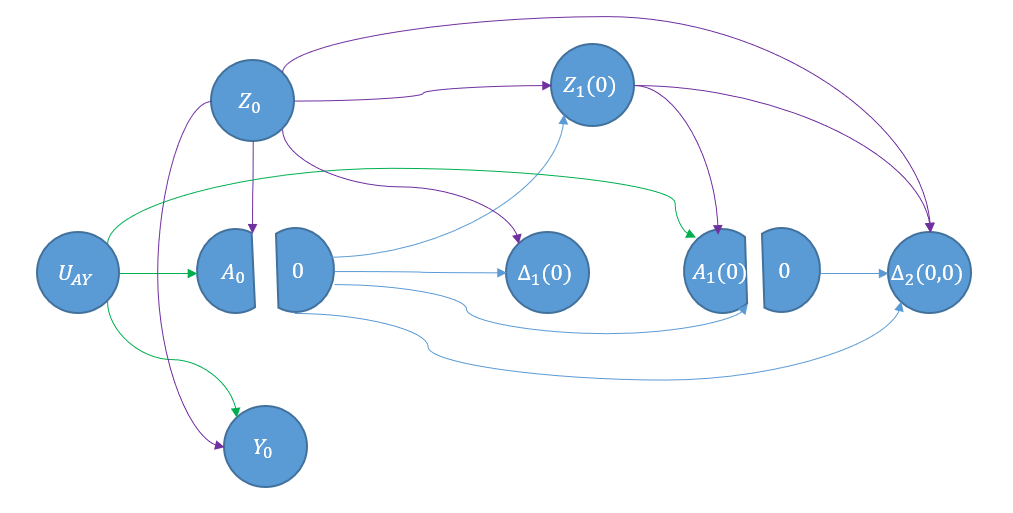}
    \caption{Single World Intervention Graph of the data generating process from our simulation scenario. The absence of arrows from $U_{AY}$ into the counterfactual increments $\Delta_t(\bar{0})$ is scale dependent and based on knowledge of the functional form of the data generating process. The conditional independencies that can be read off the SWIG imply that Assumption \ref{parallel_trends} holds. However, dependence of $A_1$ and $\Delta_2(\bar{0})$ via the $A_1(0)\leftarrow Z_1(0)\rightarrow \Delta_2(\bar{0})$ pathway implies that the never-treated parallel trends assumption (\ref{pt_cands}) would fail based on the argument in the main text.}
    \label{fig:swig_sim}
\end{figure}

Among those with $Z_0=0$, the trend among the never treated was $E\{Y_2(0,0)-Y_1(0)|A_0=A_1=0,Z_0=0\}=0.27$. The counterfactual untreated trend among those with $Z_0=0$ who were treated at time 0 was $E\{Y_2(0,0)-Y_1(0)|A_0=1,Z_0=0\}=0.37$. As these two trends were not equal, the `never treated' version of parallel trends is violated. The reason for the violation was that the average value of $Z_1(0)$ differed between the treated and control groups-- it was $0.27$ in the never treated group and $0.37$ in the $A_0=1$ group.  

By contrast, the counterfactual future untreated trend among those with $Z_0=0$ who were untreated at time 0 (but some of whom went on to be treated at time 1) was $E\{Y_2(0,0)-Y_1(0)|Z_0=0,A_0=0\}=0.37=\hat{E}\{Y_2(0,0)-Y_1(0)|Z_0=0,A_0=1\}$. Hence, our Assumption \ref{parallel_trends} did hold at level $L_0=Z_0=0$. The average value of $Z_1(0)$ was $0.37$ in both the $A_0=0$ and $A_0=1$ groups. A similar pattern occurred in the $Z_0=1$ group, with never treated parallel trends (\ref{pt_cands}) failing due to the $Z_1(0)$ variable's association with $A_1$ and Assumption \ref{parallel_trends} holding.  
R code for this simple illustration can be found at https://github.com/zshahn/marginal-structural-nested-mean-models-under-parallel-trends. 

\begin{figure}[h]
    \centering
    \includegraphics[scale=0.65]{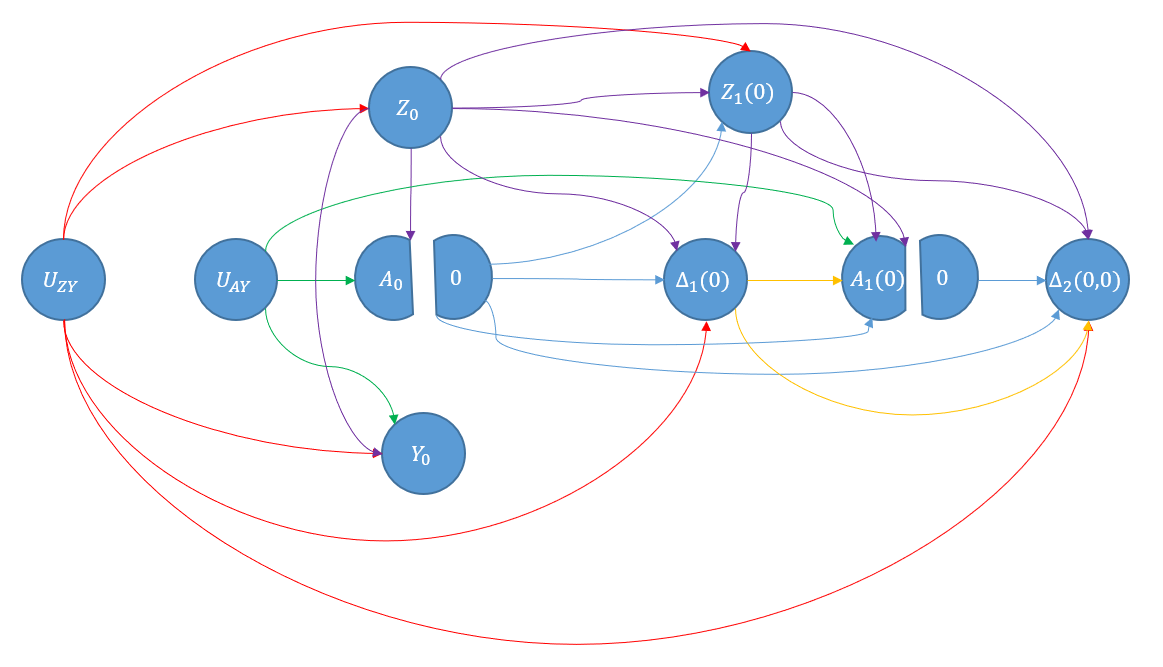}
    \caption{Single World Intervention Graph (SWIG) depicting a scenario under which our parallel trends assumption would hold. Key assumptions encoded by missing arrows include that future covariates are not confounded with treatment, and the baseline outcome does not cause future covariates or counterfactual increments.}
    \label{fig:swig}
\end{figure}

However, the question remains--under what conditions would Assumption \ref{parallel_trends} hold? Let $\bar L_m$ comprise $(\bar Z_m,\bar\Delta_m)$, i.e. the full covariate history and the history of outcome \textit{increments} but not outcome levels. Figure \ref{fig:swig} contains a SWIG depicting some key implications of Assumption \ref{parallel_trends} in the two time point setting. $\Delta_t(\bar{0})$ again denotes the difference $Y_t(\bar{0})-Y_{t-1}(\bar{0})$. Salient features of the graph include:
\begin{itemize}
    \item[(a)] Future covariates are unconfounded with past treatment, i.e. no arrows from $U_{AY}$ into $\bar{Z}$, no arrows from $U_{ZY}$ into $\bar{A}$, and independence of $U_{AY}$ and $U_{ZY}$.
    \item[(b)] The baseline outcome is `inert', i.e. there are no arrows from $Y_0$ into future outcomes, covariates, or $A_0$.
\end{itemize}
Condition (a) can be viewed as a generalization to the time-varying treatment setting of the result of \citet{caetano2022difference} showing that in the presence of time-varying confounding of a point exposure, parallel trends conditional on pre-treatment covariate values alone is possible if the post-treatment covariates are not confounded with treatment. Note also that $Y_0$ could not be included in the conditioning set $\bar L$, as doing so would open the collider path $A_1\leftarrow U_{AY}\rightarrow Y_0 \leftarrow U_{ZY}\rightarrow\Delta_2(0,0)$, thereby losing identification. Thus, under this SWIG, we cannot estimate effect heterogeneity as a function of past outcome values. We emphasize that Figure \ref{fig:swig} is scale dependent, and the structural requirements (e.g. constant additive association between $U_{AY}$ and the outcome over time) for the conditional indepedencies of counterfactual trends cannot be depicted in a SWIG. See \citet{ghanem2022selection} for a detailed discussion of structural conditions for parallel trends.

When we are interested in effects of interventions beyond initiation (e.g. blip effects, effects of sustained interventions, or CDEs) and the treatment might switch values multiple times in the data, then our parallel trends assumption (\ref{parallel_trends}) imposes additional restrictions compared to the staggered adoption setting when treatment is coded as in Remark \ref{staggered}. Future expected counterfactual trends are required to be equal in the treated and untreated given covariate and treatment history \textit{at all times}, not only when there has been no prior treatment. However, this more general assumption is a natural generalization of the staggered adoption setting, as `no prior treatment' is just a particular treatment history. The more general assumption is required for identification and estimation of effects of interventions involving treatments following initiation.

 While the interpretation of the parallel trends assumption (\ref{parallel_trends}) depends on $g$, it may be difficult to justify the assumption for one regime over another, especially if the regimes in question are complex. If a practitioner invokes (\ref{parallel_trends}) for an arbitrary complex regime of interest, a critic might argue that the practitioner is making the assumption out of convenience. That is, whatever treatment regime had been of interest, the analyst would have invoked the assumption corresponding to that regime and therefore effectively makes the assumption for all $g$. However, assuming parallel trends for \textit{all} regimes is a much stronger assumption than is usually made in the DiD literature and imposes restrictions on effect heterogeneity. In Appendix \ref{appendix_effect_mod}, we show that parallel trends under all regimes (practically) implies that there is no effect modification on the additive scale by unobserved confounders. 

\subsection{Placebo tests}
A common practice in DiD analyses is to examine ``event-study'' plots that include estimates of `placebo’ effects on outcomes at times prior to treatment initiation. Since future treatment cannot causally affect past outcomes, nonzero estimates for pre-treatment periods are evidence of confounding of treatment with past outcome increments. Confounding with past outcome increments, in turn, is taken as evidence of confounding with future increments, which would violate parallel trends. The ability to empirically check the plausibility of the parallel trends assumption in this way is a major selling point for DiD.

The typical rationale for expecting that future parallel trends should co-occur with pre-treatment parallel trends is an implicit fixed effects generative model for the outcome across time points. \citet{ghanem2026should} recently formalized the question of
when pre-treatment parallel trends should be expected to accompany the
post-treatment parallel trends assumptions used for identification in DiD.
They study this question in a three-period, two-group point-treatment setting
under a standard untreated-outcome two-way fixed effects model
\[
Y_{it}(0)=\alpha_i+\lambda_t+\varepsilon_{it}.
\]
They emphasize that this model represents a favorable benchmark for pre-trend
tests, since it encodes the idea that the untreated outcome structure is stable
across pre- and post-treatment periods. They do not consider the case of time-varying treatments.

We take a complementary graphical approach to reasoning about when pre-treatment placebo tests might be informative about our post-treatment parallel trends assumption (\ref{parallel_trends}) in settings with potentially time-varying treatments and covariates. In the absence of time-varying covariates, our reasoning leads to qualitatively similar conditions for informative pre-treatment placebo tests as \citep{ghanem2026should}, but without invoking a TWFE model.  

For simplicity, consider \(g=\bar 0\), the untreated regime. Suppose that the data
generating process is described by the SWIG in Figure \ref{fig:swig}, strengthened
so that pre-\(m\) untreated outcome increments do not cause subsequent treatments or covariates. That is, we assume that in the untreated SWIG: (a) there are no unmeasured common causes of treatments and increments $\bar\Delta(\bar 0)$; (b) there are no unmeasured common causes of treatments and covariates $\bar Z$; (c) past outcomes do not cause future increments, treatments, or covariates; and (d) past increments do not cause future treatments or covariates. Under
such a graph, $(\bar Z_m,\bar A_{m-1})$ is a sufficient adjustment set for Assumption
(\ref{parallel_trends}), so that past increments \(\bar\Delta_m\) are not included
in the adjustment set.

For \(q<m\), a natural placebo estimand compares
\[
E(\Delta_q \mid A_m=a_m,\bar A_{m-1}=\bar 0,\bar Z_m=\bar z_m)
\]
with
\[
E(\Delta_q \mid A_m=0,\bar A_{m-1}=\bar 0,\bar Z_m=\bar z_m).
\]
This is a pre-first-treatment diagnostic in that it is defined on the risk set
\(\bar A_{m-1}=\bar 0\). On this risk set, consistency implies
\[
\bar Z_m=\bar Z_m(\bar 0), \qquad \Delta_q=\Delta_q(\bar 0).
\]
Thus, if the SWIG implies
\[
A_m \perp\!\!\!\perp \Delta_q(\bar 0)
\mid \bar Z_m(\bar 0),\bar A_{m-1}=\bar 0,
\]
then it also implies the observed-data restriction
\[
E(\Delta_q \mid A_m,\bar A_{m-1}=\bar 0,\bar Z_m)
=
E(\Delta_q \mid \bar A_{m-1}=\bar 0,\bar Z_m).
\]
Hence, the population placebo contrast is zero, and the pre-treatment placebo provides a means to test a natural justification for parallel trends assumption (\ref{parallel_trends}). As with conventional event-study tests, failure is informative, but passing the test does not prove parallel trends---it only fails to reject one observable implication of the graphical sufficient condition.

There are notable limitations to such placebo testing. First, it is possible that parallel trends holds only conditional on past increments. However, a placebo test that conditioned on past increments would vacuously be 0 and therefore not be useful. The test is also limited to periods preceding initial treatment. Analogous diagnostics for later treatment decisions among units with 
nonzero prior treatment histories run into difficulties. Observed 
covariates through \(m\) may have been affected by earlier treatment, so 
\(\bar Z_m\) need not equal \(\bar Z_m(\bar 0)\), the variable appearing on the SWIG. Conditioning on the observed 
treatment-affected covariate history can induce associations between 
future treatment and untreated pre-\(m\) increments. For this reason, we view event-study diagnostics as primarily appropriate for initiation effects.

\section{\protect\normalsize Conclusion}\label{section_conclusion}

To summarize, we have shown that SNMMs expand the set of causal questions that can be addressed under
parallel trends assumptions. In particular, we have shown that additive and multiplicative general regime SNMMs are identified under time-varying conditional parallel trends
assumptions, which are more plausible in the presence of time-varying confounding than alternative parallel trends assumptions in the literature, and we have provided estimators. Using SNMMs, it is possible to do many things that were not
possible with standard DiD approaches, such as: characterize effect
heterogeneity as a function of time-varying covariates (e.g. unemployment in the Medicaid expansion analysis of Section \ref{medicaid}), estimate the effect
of one final blip of treatment (as in the flood analysis of Section \ref{floods}) and other derived contrasts, estimate controlled direct effects, estimate effects of sustained interventions when treatment changes value in the data (as in the crop yields example of Section \ref{crops}), and estimate time-varying effects of interventions on continuous valued treatments that do not share a baseline level (the crop example again). We have also explained how to estimate counterfactual expectations under a possibly complex dynamic regime as long as parallel trends holds with respect to that regime. The SWIG in Figure \ref{fig:swig} makes clear that the conditions required for parallel trends to hold in the presence of a time-varying confounder are rather stringent, but we have provided an approach for sensitivity analysis to violations. Under the stronger assumption that unobserved confounders are not effect modifiers on the relevant scale, we have further shown that optimal dynamic treatment regimes are identified via optimal regime SNMMs. We hope these new capabilities can be put to use in a wide variety of applications. 

\section*{Acknowledgements}
We would like to thank Andrea Rotnitzky for helpful discussions.

\bibliography{bibliography}

{\normalsize 
}

\section*{\protect\normalsize Appendix}

\section*{Appendix}
\appendix
\section{Proof of Theorem \ref{standard_id}}\label{appendix_theorem1}
Part (i)  
\begin{align}
\begin{split}
&E[U_{mk}(s_m,\gamma^*)] = E[E[U_{mk}(s_m,\gamma^*)|\bar{L}_m,\bar{A}_m]] \\
&=E[ \\
&E[H_{mk}(\gamma^{*})-H_{m,k-1}(\gamma^{*})|\bar{L}_m,\bar{A}%
_{m-1}](s_m(k,\overline{L}_{m},\overline{A}_{m})-E[s_m(k,\overline{L}_{m},%
\overline{A}_{m})|\bar{L}_m,\bar{A}_{m-1}]) \\
&] \\
&\text{by (\ref{id})} \\
&=E[ \\
&E[H_{mk}(\gamma^{*})-H_{m,k-1}(\gamma^{*})|\bar{L}_m,\bar{A}_{m-1}]\times
\\
&E[s_m(k,\overline{L}_{m},\overline{A}_{m})-E[s_m(k,\overline{L}_{m},%
\overline{A}_{m})|\bar{L}_m,\bar{A}_{m-1}]|\bar{A}_{m-1},\bar{L}_m] \\
&] \\
&\text{by nested expectations} \\
&=0
\end{split}%
\end{align}
The above establishes that the true blip functions are a solution to these
equations. The proof of uniqueness follows from the two Lemmas below. 

\begin{lemma}
{\normalsize \label{lemma1} Any functions $\gamma$ that satisfy (\ref%
{np_est_eq}) for all $s_m(k,\bar{l}_m,\bar{a}_m)$ also satisfy (\ref{id}),
i.e. 
\begin{equation*}
E[H_{mk}(\gamma) - H_{mk-1}(\gamma)|\bar{A}_m=\bar{a}_m,\bar{L}_m] =
E[H_{mk}(\gamma) - H_{mk-1}(\gamma)|\bar{A}_m=(\bar{a}_{m-1},g),\bar{L}_m]
\forall k>m.
\end{equation*}
}
\end{lemma}

\begin{proof}
{\normalsize Since (\ref{np_est_eq}) must hold for all $s_m(k,\bar{l}_m,\bar{%
a}_m)$, in particular it must hold for 
\begin{equation*}
s_m(k,\bar{l}_m,\bar{a}_m) = E[H_{mk}(\gamma) - H_{mk-1}(\gamma)|\bar{A}_m,%
\bar{L}_m].
\end{equation*}
Plugging this choice of $s_m(k,\bar{l}_m,\bar{a}_m)$ into $U_{mk}$, we get 
\begin{align*}
&E[\{H_{mk}(\gamma) - H_{mk-1}(\gamma)\}\times \\
&\{E[H_{mk}(\gamma) - H_{mk-1}(\gamma)|\bar{A}_m,\bar{L}_m]-E[H_{mk}(\gamma)
- H_{mk-1}(\gamma)|\bar{L}_m,\bar{A}_{m-1}]\}]=0 \\
&\implies \\
&E[H_{mk}(\gamma) - H_{mk-1}(\gamma)|\bar{A}_m,\bar{L}_m]=E[H_{mk}(\gamma) -
H_{mk-1}(\gamma)|\bar{L}_m,\bar{A}_{m-1}],
\end{align*}
which proves the result. }
\end{proof}

\begin{lemma}
{\normalsize \label{lemma2} $\gamma^*$ are the unique functions satisfying (%
\ref{id}) }
\end{lemma}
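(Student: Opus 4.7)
My plan is to adapt the induction scheme used in Lemma \ref{coarse_lemma2} to the standard SNMM setting, inducting on the lag $j = k-m$ to show that any alternative vector $\gamma'$ satisfying (\ref{id}) must agree with $\gamma^*$ at every component $(m,k)$ with $m < k \leq K$. The target at the inductive stage $j = n+1$ is the equality $\gamma^*_{mk}(\bar{a}_m,\bar{l}_m) = \gamma'_{mk}(\bar{a}_m,\bar{l}_m)$ for all $k$ with $k - m = n+1$, using that both functions must vanish when $a_m = 0$.

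For the base case $j = 1$ (i.e.\ $m = k-1$), note that $H_{(k-1)k}(\gamma) - H_{(k-1)(k-1)}(\gamma) = Y_k - Y_{k-1} - \gamma_{(k-1)k}(\bar{A}_{k-1},\bar{L}_{k-1})$. Because the blip term is measurable with respect to $(\bar{A}_{k-1},\bar{L}_{k-1})$ on both sides of (\ref{id}) and vanishes when $a_{k-1} = 0$, condition (\ref{id}) evaluated at $\bar{A}_m = \bar{a}_m$ uniquely pins down $\gamma^*_{(k-1)k}(\bar{A}_{k-1},\bar{L}_{k-1})$ as $E[Y_k - Y_{k-1}|\bar{A}_{k-1},\bar{L}_{k-1}] - E[Y_k - Y_{k-1}|(\bar{A}_{k-2},0),\bar{L}_{k-1}]$, and the identical formula pins down $\gamma'_{(k-1)k}$, forcing equality. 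For the inductive step, assume uniqueness has been established for all lags $j' \leq n$ and consider $(m,k)$ with $k - m = n+1$. Expanding yields
\begin{equation*}
H_{mk}(\gamma) - H_{m(k-1)}(\gamma) = Y_k - Y_{k-1} - \gamma_{mk}(\bar{A}_m,\bar{L}_m) + \gamma_{m(k-1)}(\bar{A}_m,\bar{L}_m) - \sum_{j=m+1}^{k-1}\bigl(\gamma_{jk}(\bar{A}_j,\bar{L}_j) - \gamma_{j(k-1)}(\bar{A}_j,\bar{L}_j)\bigr).
\end{equation*}
Every $\gamma_{jk}$ appearing with $j \geq m+1$ has lag at most $n$, every $\gamma_{j(k-1)}$ appearing with $j \geq m+1$ has lag at most $n-1$, and the $j = m$ contribution $\gamma_{m(k-1)}$ has lag exactly $n$; all of these terms are therefore identical for $\gamma^*$ and $\gamma'$ by the inductive hypothesis. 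Subtracting (\ref{id}) written for $\gamma'$ from (\ref{id}) written for $\gamma^*$ thus collapses everything except the $\gamma_{mk}$ contributions, giving $\gamma^*_{mk}(\bar{A}_m,\bar{L}_m) - \gamma'_{mk}(\bar{A}_m,\bar{L}_m)$ on the left (this difference is pulled out of the conditional expectation because it is measurable with respect to the conditioning variables) and $\gamma^*_{mk}((\bar{A}_{m-1},0),\bar{L}_m) - \gamma'_{mk}((\bar{A}_{m-1},0),\bar{L}_m) = 0$ on the right (both functions vanish when $a_m = 0$). This closes the induction.

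The step requiring the most care is the index bookkeeping in the inductive step, namely verifying that every blip term other than $\gamma_{mk}$ itself has lag covered by the inductive hypothesis, so that the telescoping cleanly isolates $\gamma^*_{mk} - \gamma'_{mk}$. Beyond that I do not anticipate any conceptual obstacle, since the argument runs in precise parallel to the coarse analogue already established; positivity (\ref{positivity}) is invoked only implicitly, to ensure that the conditional expectation on the right-hand side of (\ref{id}) is well defined at the events $\bar{A}_m = (\bar{a}_{m-1},0)$.
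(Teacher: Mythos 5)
Your proof is correct and follows essentially the same route as the paper's: induction on the lag $k-m$, using that the blip difference is measurable with respect to the conditioning variables on the left of (\ref{id}) and vanishes at $a_m=0$ on the right, with the inductive hypothesis cancelling all lower-lag terms. The only cosmetic differences are your explicit closed-form expression for $\gamma^*_{(k-1)k}$ in the base case and a harmless index slip (the sum of $\gamma_{j(k-1)}$ terms should stop at $j=k-2$, the $j=k-1$ term being void by convention).
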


\begin{proof}
{\normalsize We proceed by induction. Suppose for any $k$ there exist other
functions $\gamma^{^{\prime }}$ in addition to $\gamma^*$ satisfying (\ref%
{id}) and satisfying $\gamma_{mk}^{^{\prime }}(\bar{l}_m,(\bar{a}_{m-1},g(\bar{l}_m,\bar{a}_{m-1}))=0
$. Then 
\begin{align*}
&E[H_{(k-1)k}(\gamma^*) - H_{(k-1)(k-1)}(\gamma^*)|\bar{A}_{k-1},\bar{L}%
_{k-1}] = E[H_{(k-1)k}(\gamma^*) - H_{(k-1)(k-1)}(\gamma^*)|(\bar{A}%
_{k-2},g),\bar{L}_{k-1}] \\
&E[H_{(k-1)k}(\gamma^{^{\prime }}) - H_{(k-1)(k-1)}(\gamma^{^{\prime }})|%
\bar{A}_{k-1},\bar{L}_{k-1}] = E[H_{(k-1)k}(\gamma^{^{\prime }}) -
H_{(k-1)(k-1)}(\gamma^{^{\prime }})|(\bar{A}_{k-2},g),\bar{L}_{k-1}].
\end{align*}
Differencing both sides of the above equations using the definition of $%
H_{mk}(\gamma)$ yields that 
\begin{align*}
&E[\gamma^*_{(k-1)k}(\bar{L}_{k-1},\bar{A}_{k-1}) - \gamma^{^{\prime
}}_{(k-1)k}(\bar{L}_{k-1},\bar{A}_{k-1})|\bar{A}_{k-1},\bar{L}_{k-1}] \\
&=\gamma^*_{(k-1)k}(\bar{L}_{k-1},\bar{A}_{k-1}) - \gamma^{^{\prime
}}_{(k-1)k}(\bar{L}_{k-1},\bar{A}_{k-1}) \\
&=E[\gamma^*_{(k-1)k}(\bar{L}_{k-1},\bar{A}_{k-1}) - \gamma^{^{\prime
}}_{(k-1)k}(\bar{L}_{k-1},\bar{A}_{k-1})|(\bar{A}_{k-2},g),\bar{L}_{k-1}] \\
&=0,
\end{align*}
where the last equality follows from the assumption that $%
\gamma_{mk}^{^{\prime }}(\bar{a}_m,\bar{l}_m)=0$ when $a_m=g(\bar{l}_m,\bar{a}_{m-1})$. This
establishes that $\gamma^*_{(k-1)k}=\gamma^{^{\prime }}_{(k-1)k}$ for all $k$%
. Suppose for the purposes of induction that we have established that $%
\gamma^*_{(k-j)k}=\gamma^{^{\prime }}_{(k-j)k}$ for all $k$ and all $j\in
\{1,\ldots,n\}$. Now consider $\gamma^*_{(k-(n+1))k}$ and $\gamma^{^{\prime
}}_{(k-(n+1))k}$. Again we have that 
\begin{align*}
&E[H_{(k-(n+1))k}(\gamma^*) - H_{(k-(n+1))(k-1)}(\gamma^*)|\bar{A}_{k-(n+1)},%
\bar{L}_{k-(n+1)}] = \\
&E[H_{(k-(n+1))k}(\gamma^*) - H_{(k-(n+1))(k-1)}(\gamma^*)|(\bar{A}%
_{k-(n+2)},0),\bar{L}_{k-(n+1)}] \\
&and \\
&E[H_{(k-(n+1))k}(\gamma^{^{\prime }}) - H_{(k-(n+1))(k-1)}(\gamma^{^{\prime
}})|\bar{A}_{k-(n+1)},\bar{L}_{k-(n+1)}] = \\
&E[H_{(k-(n+1))k}(\gamma^{^{\prime }}) - H_{(k-(n+1))(k-1)}(\gamma^{^{\prime
}})|(\bar{A}_{k-(n+2)},g),\bar{L}_{k-(n+1)}].
\end{align*}
And again we can difference both sides of these equations plugging in the
expanded definition of $H_{(k-(n+1))k}(\gamma)$ to obtain 
\begin{align}
\begin{split}  \label{lemma2_step1}
&E[\{Y_k - \sum_{j=k-(n+1)}^{k-1}\gamma^*_{jk}(\bar{A}_j,\bar{L}_j)\} -
\{Y_k - \sum_{j=k-(n+1)}^{k-1}\gamma^{^{\prime }}_{jk}(\bar{A}_j,\bar{L}%
_j)\}|\bar{A}_{k-(n+1)},\bar{L}_{k-(n+1)}] \\
&= E[\{Y_k - \sum_{j=k-(n+1)}^{k-1}\gamma^*_{jk}(\bar{A}_j,\bar{L}_j)\} -
\{Y_k - \sum_{j=k-(n+1)}^{k-1}\gamma^{^{\prime }}_{jk}(\bar{A}_j,\bar{L}%
_j)\}|(\bar{A}_{k-(n+2)},g),\bar{L}_{k-(n+1)}].
\end{split}%
\end{align}
Under our inductive assumption, (\ref{lemma2_step1}) reduces to 
\begin{align*}
&E[\gamma^*_{(k-(n+1))k}(\bar{L}_{k-(n+1)},\bar{A}_{k-(n+1)}) -
\gamma^{^{\prime }}_{(k-(n+1))k}(\bar{L}_{k-(n+1)},\bar{A}_{k-(n+1)})|\bar{A}%
_{k-(n+1)},\bar{L}_{k-(n+1)}] \\
&=\gamma^*_{(k-(n+1))k}(\bar{L}_{k-(n+1)},\bar{A}_{k-(n+1)}) -
\gamma^{^{\prime }}_{(k-(n+1))k}(\bar{L}_{k-(n+1)},\bar{A}_{k-(n+1)}) \\
&=E[\gamma^*_{(k-(n+1))k}(\bar{L}_{k-(n+1)},\bar{A}_{k-(n+1)}) -
\gamma^{^{\prime }}_{(k-(n+1))k}(\bar{L}_{k-1},\bar{A}_{k-(n+1)})|(\bar{A}%
_{k-(n+2)},g),\bar{L}_{k-(n+1)}] \\
&=0,
\end{align*}
proving that $\gamma_{(k-(n+1))k}^* = \gamma_{(k-(n+1))k}^{^{\prime }}$ for
all $k$. Hence, by induction, the result follows. }
\end{proof}

{\normalsize Uniqueness is a direct corollary of Lemmas \ref{lemma1} and \ref%
{lemma2}. }


{\normalsize Part (ii) 
\begin{align}
\begin{split}
&E[U^{\dagger}_{mk}(s_{m},\underline{\mathbf{\gamma}}_{m}^{*},\widetilde{v}%
_{m},\tilde{\pi}_m)] =
E[E[U^{\dagger}_{mk}(s_{m},\underline{\mathbf{\gamma}}_{m}^{*},\widetilde{v}%
_{m},\tilde{\pi}_m)|\bar{L}_m,%
\bar{A}_m]] \\
&=E[ \\
&E[H_{m,k}(\gamma^{*})-H_{m,k-1}(\gamma^{*})-\tilde{v}_m(k,\bar{L}_m,\bar{A}%
_{m-1},\gamma^{*})|\bar{L}_m,\bar{A}_{m-1}]q(\overline{L}_{m},\overline{A}%
_{m-1})\times \\
&(s_m(k,\overline{L}_{m},\overline{A}_{m})-E_{\tilde{\pi}_m}[s_m(k,\overline{L}_{m},\overline{A}_{m})|\bar{L}_m,\bar{A%
}_{m-1}]) \\
&] \\
&\text{by (\ref{id})} \\
&=E[ \\
&E[E[H_{m,k}(\gamma^{*})-H_{m,k-1}(\gamma^{*})-\tilde{v}_m(k,\bar{L}_m,\bar{A%
}_{m-1},\gamma^{*})|\bar{L}_m,\bar{A}_{m-1}]q(\overline{L}_{m},\overline{A}%
_{m-1})\times \\
&E[s_m(k,\overline{L}_{m},\overline{A}_{m})-\tilde{E}[s_m(k,\overline{L}_{m},\overline{A}_{m})|\bar{L}_m,%
\bar{A}_{m-1}]|\bar{L}_m,\bar{A}_{m-1}]] \\
&]
\end{split}%
\end{align}
Now the result follows because if $\tilde{v}_m(k,\bar{L}_m,\bar{A}%
_{m-1},\gamma^{*}) = v_m^{*}(k,\bar{L}_m,\bar{A}_{m-1},\gamma^{*})$, then 
\begin{equation*}
E[H_{mk}(\gamma^{*})-H_{m,k-1}(\gamma^{*})-\tilde{v}_m(k,\bar{L}_m,\bar{A%
}_{m-1},\gamma^{*})|\bar{L}_m,\bar{A}_{m-1}]= 0
\end{equation*}
and if $\tilde{E}[s_{mk}(\bar{L}_m,\bar{A}_m)|\bar{L}_m,\bar{A}_{m-1}]=E[s_{mk}(\bar{L}_m,\bar{A}_m)|\bar{L}_m,\bar{A}_{m-1}]$ then 
\begin{equation*}
E[s_m(k,\overline{L}_{m},\overline{A}_{m})-\tilde{E}[s_m(k,\overline{L}_{m},\overline{A}_{m})|\bar{L}_m,\bar{A%
}_{m-1}]|\bar{L}_m,\bar{A}_{m-1}]]= 0.
\end{equation*}
}

\section{Proof of Theorem \ref{theorem_est}}

\begin{proof}
Along the lines of Proposition 1 in \citet{kennedy2022semiparametric} and Lemma 3 in \citet{kennedy2023semiparametric}, we have that 
\begin{align*}
\sqrt{N}(\hat{\psi}^{cf}-\psi^{*})=&V(\psi^*,s,\pi^*)^{-1}\sqrt{N}(\mathbb{P}_N-\mathbb{P})\left\{\sum_{m=0}^KU_m(\psi^*,s_m,\pi^*,\nu^*)\right\}\\&+\sqrt{N}O_\mathbb{P}\left(\sum^{2}_{s=1}\left(\frac{n_s}{N}\right)\mathbb{P}\left\{\sum_{m=0}^K U_m(\psi^*,s_m,\hat{\pi}^{(-s)},\hat{\nu}^{(-s)})\right\}\right)+o_\mathbb{P}\left(1\right). 
\end{align*}
In order to apply Lemma 3, we require Assumptions 1, 2, 4 and 6. Then the main result follows so long as we can show that
\begin{align}\label{eq:key_res}
\sqrt{N}O_\mathbb{P}\left(\sum^{2}_{s=1}\left(\frac{n_s}{N}\right)\mathbb{P}\left\{\sum_{m=0}^K U_m(\psi^*,s_m,\hat{\pi}^{(-s)},\hat{\nu}^{(-s)})\right\}\right)=o_\mathbb{P}(1).
\end{align}

Fix $s=1$. We will proceed first by showing that each component of 
\begin{align}\label{eq:target_res}
\mathbb{P}\left\{\sum_{m=0}^K U_m(\psi^*,s_m,\hat{\pi}^{(-1)},\hat{\nu}^{(-1)})\right\}
\end{align}
is $o_\mathbb{P}(N^{-1/2})$. For any $m=0,...K$, 
\begin{align*}
&\sum^K_{m=0}U_m(O;\psi^*,s_m,\hat{\pi}^{(-1)},\hat{\nu}^{(-1)})\\&=\begin{pmatrix}
\sum^{K-1}_{m=0}\sum^{K}_{k=m+1}s_{mk1}(\bar{L}_m,\bar{A}_{m-1})\{A_m-\hat{\pi}^{(-1)}_m(\overline{L}_{m},%
\overline{A}_{m-1})\}\{H_{mk}(\psi^*)-H_{m,k-1}(\psi^*)-\hat{\nu}^{(-1)}_{mk}(\overline{L}_{m},%
\overline{A}_{m-1})\}\\
\vdots \\
\sum^{K-1}_{m=0}\sum^{K}_{k=m+1}s_{mkp}(\bar{L}_m,\bar{A}_{m-1})\{A_m-\hat{\pi}^{(-1)}_m(\overline{L}_{m},%
\overline{A}_{m-1})\}\{H_{mk}(\psi^*)-H_{m,k-1}(\psi^*)-\hat{\nu}^{(-1)}_{mk}(\overline{L}_{m},%
\overline{A}_{m-1})\}
\end{pmatrix}
\end{align*}
where $s_{mkj}(\bar{L}_m,\bar{A}_{m-1})$ is the  $(j,k)$th element of $s_m(\bar{L}_m,\bar{A}_{m-1})$. Then with some abuse of notation,
\begin{align*}
&\bigg|\mathbb{P}\left\{\sum^{K-1}_{m=0}\sum^{K}_{k=m+1}s_{mk1}(A_m-\hat{\pi}^{(-1)}_m)\{H_{mk}(\psi^*)-H_{m,k-1}(\psi^*)-\hat{\nu}^{(-1)}_{mk}\}\right\}\bigg|&\\
&= \bigg|\mathbb{P}\left\{\sum^{K-1}_{m=0}\sum^{K}_{k=m+1}s_{mk1}(\pi^*_m-\hat{\pi}^{(-1)}_m)(\nu_{mk}^*-\hat{\nu}^{(-1)}_{mk})\right\}\bigg|\\
&\leq \mathbb{P}\left\{\bigg|\sum^{K-1}_{m=0}\sum^{K}_{k=m+1}s_{mk1}(\pi^*_m-\hat{\pi}^{(-1)}_m)(\nu_{mk}^*-\hat{\nu}^{(-1)}_{mk})\bigg|\right\}\\
& \leq \mathbb{P}\left\{\sum^{K-1}_{m=0}\sum^{K}_{k=m+1}|s_{mk1}(\pi^*_m-\hat{\pi}^{(-1)}_m)(\nu_{mk}^*-\hat{\nu}^{(-1)}_{mk})| \right\}\\
&=\mathbb{P}\left\{\sum^{K-1}_{m=0}\sum^{K}_{k=m+1}|s_{mk1}||\pi^*_m-\hat{\pi}^{(-1)}_m||\nu_{mk}^*-\hat{\nu}^{(-1)}_{mk}|\right\} &\\
&\lesssim \mathbb{P}\left\{\sum^{K-1}_{m=0}\sum^{K}_{k=m+1}|\pi^*_m-\hat{\pi}^{(-1)}_m||\nu_{mk}^*-\hat{\nu}^{(-1)}_{mk}|\right\} &\\
&=O_{\mathbb{P}}\left(\sum_{m=0}^{K-1} \sum_{k=m+1}^{K}\norm{\hat{\pi}_{m} - \pi_{m}^*}\norm{\hat{\nu}_{mk}-\nu_{mk}^*}\right)\\
&=o_{\mathbb{P}}(N^{-1/2}) .
\end{align*}
where we repeatedly use the triangle inequality plus Assumptions 3 and 5. The same reason holds with $s=2$. Then because 
$n_1/N$ and $n_1/N$ both tend to a constant by assumption, the above results imply \eqref{eq:key_res}.
\end{proof}

\section{Relationship Between Universal Parallel Trends and No Additive Effect Modification by Unobserved Confounders Assumptions}\label{appendix_effect_mod}
To identify the counterfactual expectation $E[Y_k(g)]$ for a given $g$, we assumed parallel trends under $g$. One might argue, however, that if one assumes that parallel trends happen to hold for a particular regime of interest, one is really in effect assuming that parallel trends holds for \emph{all} regimes $g\in\mathcal{G}$. To identify optimal treatment strategies via optimal regime SNMMs, we needed to make the additional assumption (\ref{no_mod}) that there is no additive effect modification by unobserved confounders. Below, we sketch an argument that parallel trends under all regimes \emph{effectively}, though not strictly mathematically, implies no effect modification by unobserved confounders.

We will attempt to show the contrapositive of our result of interest, i.e. we will try to show that if there is additive effect modification by an unobserved confounder then parallel trends under all regimes cannot hold. Suppose that $\bar{L},\bar{U}$ is a minimal set satisfying sequential exchangeability. And suppose that $\bar{U}$ \emph{is} an effect modifier, i.e. for some $m$, $k$, and $g$
\begin{equation}\label{u_effect_mod}
    E[Y_k(\bar{a}_{m-1},\underline{g}_m)-Y_k(\bar{a}_{m-1},0,\underline{g}_{m+1})|\bar{A}_m,\bar{L}_m,\bar{U}_m=\bar{u}_m]
\end{equation}
depends on $\bar{u}_m$. If parallel trends does hold for all regimes, in particular it would hold for $\underline{g}_m$ and $(0,\underline{g}_{m+1})$, i.e.
\begin{equation}
    E[Y_k(\bar{a}_{m-1},\underline{g}_m)-Y_{k-1}(\bar{a}_{m-1},\underline{g}_m)|\bar{A}_m,\bar{L}_m]
\end{equation}
would not depend on $A_m$ and 
\begin{equation}
    E[Y_k(\bar{a}_{m-1},0,\underline{g}_{m+1})-Y_{k-1}(\bar{a}_{m-1},0,\underline{g}_{m+1})|\bar{A}_m,\bar{L}_m]
\end{equation}
would not depend on $A_m$. Therefore, the difference of the above two quantities also would not depend on $A_m$, i.e.
\begin{equation}\label{parallel_diff}
    E[Y_k(\bar{a}_{m-1},\underline{g}_m)-Y_k(\bar{a}_{m-1},0,\underline{g}_{m+1})|\bar{A}_m,\bar{L}_m]-E[Y_{k-1}(\bar{a}_{m-1},\underline{g}_m)-Y_{k-1}(\bar{a}_{m-1},0,\underline{g}_{m+1})|\bar{A}_m,\bar{L}_m]
\end{equation}
would not depend on $A_m$. But $U_m$ being an effect modifier (i.e. (\ref{u_effect_mod}) depending on $u_m$) and the assumption that $(\bar{L}_m,\bar{U}_m)$ is a minimal sufficient set together imply that the first conditional expectation in (\ref{parallel_diff}) \emph{does} depend on $A_m$. Now, it is possible for the full difference in (\ref{parallel_diff}) to still not depend on $A_m$ if the second conditional expection also depends on $A_m$ and in such a way as to cancel out the dependence on $A_m$ of the first conditional expectation. While this is technically possible, the type of cancelling out required is not plausible. So by reductio ad absurdity (though not the formal Latin absurdum indicating a true contradiction), parallel trends for all regimes ``effectively" implies no additive effect modification by unobserved confounders. 

\section{Proof of Lemma \ref{sens_lemma}, enabling sensitivity analysis}\label{appendix_sens}
Recall that we define bias function
\begin{align*}
\begin{split}\label{bias_function}
&c_{mk}^g(\bar{l}_m,
\bar{a}_m)=\\
&E[Y_k(\bar{a}_{m-1},\underline{g}_m)-Y_{k-1}(\bar{a}_{m-1},\underline{g}_m)|\bar{L}_m=\bar{l}_m,\bar{A}_m=\bar{a}_m] - \\
&E[Y_k(\bar{a}_{m-1},\underline{g}_m)-Y_{k-1}(\bar{a}_{m-1},\underline{g}_m)|\bar{L}_m=\bar{l}_m,\bar{A}_m=(\bar{a}_{m-1}, g(\bar{l}_m,\bar{a}_{m-1}))].
\end{split}
\end{align*}
We can then prove Lemma 1 as follows.
\begin{align*}
&E[H_{mk}^a(\gamma^*)-H_{m,k-1}(\gamma^*)|\bar{L}_m,\bar{A}_m]\\
&=E[H_{mk}(\gamma^*)-H_{m,k-1}(\gamma^*)-c_{mk}(\bar{A}_m,\bar{L}_m)|\bar{L}_m,\bar{A}_m]\\
&=E[Y_{k}(\bar{A}_{m-1},\underbar{g})-Y_{k-1}(\bar{A}_{m-1},\underbar{g})-c_{mk}(\bar{A}_m,\bar{L}_m)|\bar{L}_m,\bar{A}_m]\\
&=E[E[Y_k(\bar{A}_{m-1},\underbar{g})-Y_{k-1}(\bar{A}_{m-1},\underbar{g})|\bar{L}_m,\bar{A}_{m-1},A_m=g(\bar{L}_m,\bar{A}_{m-1})]|\bar{L}_m,\bar{A}_m]\\
&=E[Y_k(\bar{A}_{m-1},\underbar{g})-Y_{k-1}(\bar{A}_{m-1},\underbar{g})|\bar{L}_m,\bar{A}_{m-1},A_m=g(\bar{L}_m,\bar{A}_{m-1})]\\
&=E[H_{mk}(\gamma^*)-H_{m,k-1}(\gamma^*)-c_{mk}(\bar{A}_m,\bar{L}_m)|\bar{L}_m,\bar{A}_{m-1},A_m=g(\bar{L}_m,\bar{A}_{m-1})]\\
&=E[H_{mk}^a(\gamma^*)-H_{m,k-1}(\gamma^*)|\bar{L}_m,\bar{A}_{m-1},A_m=g(\bar{L}_m,\bar{A}_{m-1})]
\end{align*}
where the first equality is by the definition of $H_{mk}^a(\gamma^*)$, the second equality is by (\ref{H_cf}), the third equality is by the definition of $c_{mk}(\bar{A}_m,\bar{L}_m)$, the fourth equality is by nested expectations, the fifth equality is again by (\ref{H_cf}) and because $c_{mk}(\bar{A}_m,\bar{L}_m)=0$ when $A_m=g(\bar{L}_m,\bar{A}_{m-1})$, and the final equality is again because $c_{mk}(\bar{A}_m,\bar{L}_m)=0$ when $A_m=g(\bar{L}_m,\bar{A}_{m-1})$. This completes the proof.

\section{Schematic Illustration to Facilitate Comparison of Parallel Trends Assumptions}\label{sec:appendix_tree}

\begin{figure}[t]
\centering
\includegraphics[scale=.7]{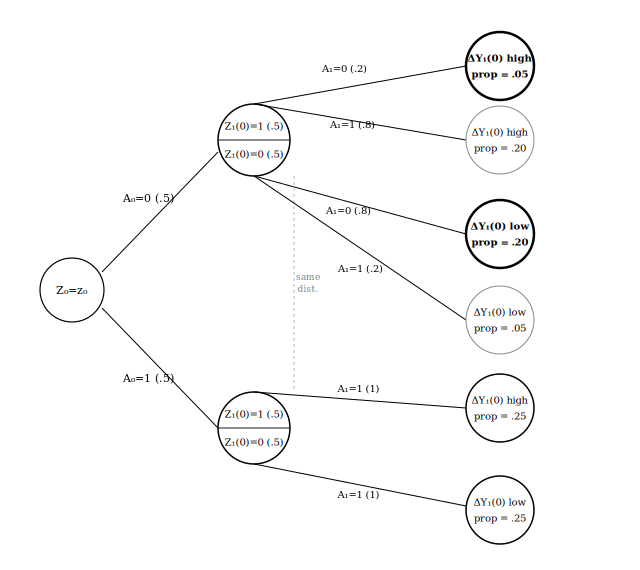}
\caption{
Tree illustrating a hypothetical distribution (conditional on baseline covariate $Z_0=z_0$) of treatments $A_0$ and $A_1$, time 1 untreated counterfactual covariate $Z_1(0)$, and counterfactual time 1 untreated trend $\Delta Y_1(0)$. The `prop' value in each leaf node states the proportion of the population in that leaf. The tree reflects a staggered adoption setting since $A_0=1$ implies $A_1=1$. The tree also reflects the assumption that $Z_1(0)$ is not confounded with $A_0$. In the depicted distribution, $A_1$ is more likely to be 1 when $Z_1=1$, and $\Delta Y_1(0)$ is more likely to be `high' when $Z_1(0)=1$. These properties of the distribution capture that $Z_1(0)$ is a time-varying confounder of the trend. Finally, the distribution depicts a scenario in which unobserved confounder $U$, not pictured, does not influence $\Delta Y_1(0)$.\\ 
\\
Our parallel trends assumption (\ref{parallel_trends}) says that the average $\Delta Y_1(0)$ value in the $A_0=1$ branch is equal to the average $\Delta Y_1(0)$ value in the $A_0=0$ branch. Examining the leaf proportions, we see that the $\Delta Y_1(0)$ values are an equal mix of `high' and `low' in both the $A_0=1$ and $A_0=0$ branches, satisfying our assumption. The Callaway and Sant'anna assumption (\ref{pt_cands}) states that the average $\Delta Y_1(0)$ value in the $A_0=1$ branch is equal to the average of the $\Delta Y_1(0)$ values in juts the $(A_0=0,A_1=0)$ nodes that are in bold. Because $Z_1(0)$ is disproportionately $0$ among those with $A_1=0$, the $\Delta Y_1(0)$ values in the $(A_0=0,A_1=0)$ nodes are disproportionately `low', which leads to a parallel trends violation. Their assumption could be rescued if $U$ impacted $\Delta Y_1(0)$ in such a way that the disparate $U$ distributions in the $A=0$ and $A=1$ branches exactly cancel the $Z_1(0)$-induced trend violation.      
}
\label{fig:tree}
\end{figure}

\end{document}